\documentclass[journal]{IEEEtran}
\usepackage{cite}
\usepackage{graphicx}
\usepackage{amsmath,amsfonts}
\usepackage{algorithm}
\usepackage{algorithmic}
\usepackage{array}
\usepackage{amsthm}
\usepackage{amssymb}
\usepackage[caption=false,font=footnotesize]{subfig}
\usepackage{textcomp}
\usepackage{stfloats}
\usepackage{enumerate}
\usepackage{verbatim}
\usepackage{balance}
\newtheorem{proposition}{Proposition}
\usepackage{hyperref}
\usepackage{orcidlink}
\begin{document}

\title{V-RIS: Virtual-Aperture DoA Estimation with Sparse RIS}


\author{Fenghao Zheng\,\orcidlink{0009-0007-2213-8296},
Fuhai Wang\,\orcidlink{0000-0001-9044-6855},
Zihan Jin\,\orcidlink{0009-0006-6205-2203},
Gui Zhou\,\orcidlink{0000-0003-1812-8830},~\IEEEmembership{Member,~IEEE,}\\
Robert Caiming Qiu\,\orcidlink{0000-0002-0988-5525},~\IEEEmembership{Fellow,~IEEE,}
and Zenan Ling\,\orcidlink{0000-0001-8047-0253},~\IEEEmembership{Member,~IEEE}
\thanks{The work of Zenan Ling was supported in part by the National Natural Science Foundation of China (via NSFC-62406119), the Natural Science Foundation of Hubei Province (2024AFB074), the Guangdong Provincial Key Laboratory of Mathematical Foundations for Artificial Intelligence (2023B1212010001), and the Key Research and Development Program of Wuhan (2024050702030100). The work of Robert Caiming Qiu was supported in part by the National Natural Science Foundation of China (via NSFC-12141107) and the Key Research and Development Program of Wuhan (2024050702030100). \emph{(Corresponding author: Zenan Ling.)}

Fenghao Zheng, Fuhai Wang, Zihan Jin, Gui Zhou, Robert Caiming Qiu, and Zenan Ling are with the School of Electronic Information and Communications (EIC), Huazhong University of Science and Technology (HUST), Wuhan 430074, China (e-mail: zhengfenghao@hust.edu.cn; wangfuhai@hust.edu.cn; jinzihan@hust.edu.cn; gui\_zhou@hust.edu.cn; caiming@hust.edu.cn; lingzenan@hust.edu.cn).}%
}

\maketitle

\begin{abstract}
  Large-aperture reconfigurable intelligent surfaces (RISs) enable high-resolution 2D direction-of-arrival (DoA) estimation, but existing approaches still tie hardware cost and control overhead to aperture size. To decouple the effective sensing aperture from the number of physically deployed RIS elements, we present V-RIS, a framework for virtual-aperture surface-field reconstruction and DoA estimation. V-RIS uses only four corner subarrays and a single-antenna receiver to reconstruct the virtual-aperture surface field from receiver observations collected under multiple RIS phase configurations, and then performs DoA estimation on the reconstructed virtual-aperture surface field. Our key observation is that, under far-field illumination, the discretized RIS surface field satisfies finite-order spatial recurrences along both aperture axes. We enforce data-level consistency through the RIS-coded receiver observations and propagation consistency through the far-field spatial recurrence, while using a four-corner deployment geometry that retains both contiguous local elements and long aperture baselines. To improve robustness in practical receiver observations, we adopt a bias-invariant receiver-domain loss that suppresses quasi-static hardware distortions and configuration-invariant multipath contributions. Extensive simulations show that V-RIS approaches the DoA accuracy of a full-aperture benchmark while producing cleaner spectra than matrix-completion and least-squares baselines. An outdoor prototype further validates the design: with only 25\% programmable elements, V-RIS keeps both elevation and azimuth errors within $1^\circ$ of the ground truth.
\end{abstract}

\begin{IEEEkeywords}
RIS, localization, DoA, virtual aperture, surface-field reconstruction
\end{IEEEkeywords}

\section{Introduction}
Wireless localization relies on extracting spatial information from received signals for applications such as navigation~\cite{10858311}, environment awareness~\cite{Ma2024MovableSensing,10745216}, and integrated sensing and communication~\cite{10243495,11111722}. As an important component of spatial information, angular information can be obtained through 2D direction-of-arrival (DoA) estimation, which resolves the elevation and azimuth of incident targets. High-resolution DoA estimation generally benefits from a large planar aperture because angular sensitivity increases with the spatial extent and second-order moment of the array~\cite{10.1007/s11045-011-0160-5,10557757,10858311}. The large physical area and element-wise phase control of a reconfigurable intelligent surface (RIS) provide both the aperture span and configurable spatial responses required for angular sensing. By successively applying known phase configurations, the RIS projects the incident field onto a sequence of spatial codes, and an external receiver records one complex receiver observation for each code~\cite{10243495,10042005,10595504}. This coded acquisition turns the passive surface into a programmable sensing aperture and reception is performed by a single external receiver.

This sensing architecture, however, involved a fundamental \emph{cost-accuracy trade-off} in practical RIS implementations. A fully programmable RIS can fully exploit a large aperture for high-resolution angular sensing, but doing so requires element-wise phase control across the entire surface, resulting in substantial hardware complexity, signaling overhead, and calibration cost~\cite{10232975,ROSSANESE2024110208,9952197,10600711}. To reduce the hardware cost of a fully programmable RIS, a practical alternative is to deploy a sparse set of programmable RIS elements over the same large sensing aperture. Although this sparse deployment preserves the physical aperture, it provides only sparse spatial samples of the incident field, making it difficult to fully exploit the aperture for high-resolution DoA estimation. Furthermore, this challenge is exacerbated when considering a single-antenna receiver. The deployed RIS elements passively modulate and reflect the incident field, while the external receiver acquires only a configuration-dependent superposition of all reflected signals rather than the individual response of each RIS element. Consequently, the surface field is not directly observable even at the deployed RIS locations, while no measurements are available over the undeployed regions of the virtual aperture. The central challenge is therefore to recover the continuous surface field over the virtual aperture from sparse coded observations acquired by a single receiver, thereby enabling the angular resolution of a much larger fully programmable RIS with substantially lower hardware cost.

In this paper, we address the cost-accuracy trade-off in RIS-aided DoA estimation. By reconstructing a virtual-aperture surface field from a sparsely deployed RIS architecture, we propose a \underline{\textbf{v}}irtual-aperture \underline{\textbf{RIS}} (\underline{\textbf{V-RIS}}) framework for high-resolution 2D DoA estimation. As illustrated in Fig.~\ref{fig:scenario}, V-RIS physically deploys only four small programmable subarrays at the aperture corners and treats the intervening positions within the aperture as virtual. Each corner subarray captures the local field structure, while the wide separation among subarrays preserves the large aperture required for high-resolution DoA estimation. A single-antenna receiver collects multiple measurements under different RIS phase configurations, from which V-RIS reconstructs the virtual-aperture surface field before applying a conventional 2D DoA estimator. In this way, RIS reconfiguration supplies observation diversity for surface-field reconstruction, and virtual-aperture surface-field reconstruction decouples the sensing aperture from the number of physically deployed programmable elements.

\begin{figure}
  \centering
  \hspace*{0.04\linewidth}
  \subfloat[Full-aperture RIS]{
    \includegraphics[width=0.40\linewidth]{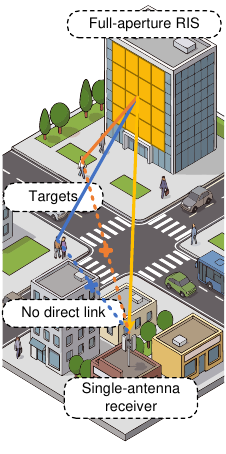}}
  \hfill
  \subfloat[Virtual-aperture RIS]{
    \includegraphics[width=0.40\linewidth]{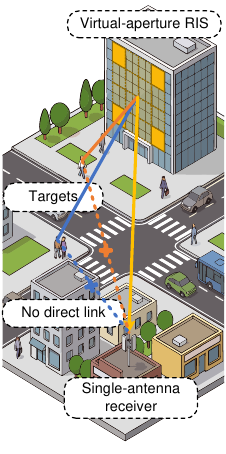}}
  \hspace*{0.04\linewidth}
  \caption{RIS-aided backward-sensing scenario without a direct target--receiver link for high-resolution 2D DoA estimation.}
  \label{fig:scenario}
\end{figure}

Recovering the virtual-aperture surface field from indirect RIS-coded observations, acquired by a single-antenna receiver, is fundamentally an ill-posed inverse problem. An implicit neural representation (INR) is employed to parameterize the continuous aperture field. We then reveal a finite-order spatial recurrence of the narrowband far-field surface field, providing a physical prior for virtual-aperture reconstruction. Building upon this insight, we further propose a dual-consistency reconstruction framework unifying data-level and propagation consistency. Specifically, data-level consistency matches the reconstructed field to the RIS-coded observations, whereas propagation consistency enforces the revealed spatial recurrence over the virtual aperture. Together, they bridge receiver-domain observations and aperture-domain propagation physics, enabling reliable virtual-aperture reconstruction from limited measurements. For robust reconstruction, a two-stage procedure first estimates the surface field from the receiver observations and then jointly refines the INR and recurrence coefficients. Finally, a closed-form affine alignment compensates for the configuration-invariant additive term and unknown global complex gain in practical systems without introducing trainable calibration variables.

The main contributions are summarized as follows:
\begin{itemize}
  \item \textit{Virtual-aperture sensing from indirect RIS observations:} We formulate virtual-aperture surface-field reconstruction for a sparsely programmed passive RIS and a single-antenna receiver that provides only one coded scalar observation per configuration. Unlike direct localization and array-completion methods, V-RIS recovers a reusable aperture-domain field without assuming element-wise or covariance samples, thereby decoupling the sensing aperture from the number of programmable elements.

  \item \textit{Virtual-aperture surface-field reconstruction and deployment co-design:} We develop a physics-informed INR framework that recovers a propagation-consistent full-aperture RIS field from sparse coded receiver observations. We further derive a four-corner deployment that jointly supports reliable field reconstruction and a large effective aperture for high-resolution 2D DoA estimation, making virtual-aperture sensing feasible under severe spatial undersampling.

  \item \textit{Simulation and prototype validation:} We validate V-RIS through numerical experiments, ablation studies, and an outdoor prototype. V-RIS recovers all simulated targets and yields the cleanest spatial spectrum, while the outdoor prototype maintains sub-degree accuracy with only $0.41^\circ$ and $0.21^\circ$ increases in elevation and azimuth errors, respectively, relative to full-aperture programming.
\end{itemize}

\section{Related Work}
The proposed V-RIS is closely related to three research directions: RIS-aided DoA estimation, sparse and virtual arrays, and physics-informed implicit neural representations (INRs). However, existing studies do not address virtual-aperture surface-field reconstruction from indirect RIS-coded single-receiver observations, which constitutes the central problem considered in this work.

\textbf{RIS-aided DoA estimation.}
Existing RIS-assisted sensing and localization methods generally estimate angular or channel parameters directly from pilot measurements collected under multiple RIS configurations~\cite{10243495,9555250,10387221,10643306,s25134140}. Grid-based sparse recovery, robust norm minimization, and gridless atomic-norm methods have been developed to improve resolution or reduce search complexity~\cite{9555250,10643306,10533449}. These methods treat the RIS response primarily as a forward model for parameter estimation. However, they do not address reconstruction of a matrix-valued virtual-aperture surface field when only a small portion of the aperture is physically programmable. V-RIS instead separates the problem into surface-field reconstruction and subsequent DoA estimation. This intermediate surface field retains the spatial phase structure required by conventional array processing and can be inspected independently of the final angular estimate. Calibration studies have also characterized phase-dependent responses, coupling, and other RIS hardware impairments~\cite{11205960,li2025risbeamcalibrationisac}. Our bias-invariant receiver-domain loss is complementary to such calibration: it eliminates global complex-gain and configuration-independent additive terms within each reconstruction iteration rather than requiring a separately calibrated forward model.

\textbf{Sparse RIS and virtual-aperture acquisition.}
Sparse arrays, antenna selection, and nonredundant layouts reduce the number of radio-frequency chains or sensors while preserving informative baselines~\cite{9443458,9962831,11205089}. RIS systems with sparse active or sensing elements similarly reduce channel-acquisition overhead by embedding a limited number of receive-capable elements in the surface~\cite{asif2023isac,10042005}. Such architectures still obtain element-wise measurements at the active sensors. Classical virtual-array methods then enlarge the effective aperture through coarray processing, covariance completion, matrix/tensor completion, or spectral compressed sensing~\cite{s22103754,10087326,10474161,10446340,10595989}. Their input is therefore direct spatial or covariance samples. In contrast, a passive RIS connected to one external receiver provides only configuration-dependent scalar projections of the deployed-element field. V-RIS reconstructs the deployed- and virtual-element field samples from these coded projections rather than assuming that the deployed-element samples are directly measured. Moreover, its four-corner layout is not introduced merely as a generic sparse aperture: each corner block supplies the consecutive samples required to estimate the spatial recurrence coefficients, while the separation between blocks preserves the aperture span.

\textbf{INR and physics-informed reconstruction.}
INRs provide continuous coordinate-based representations and have been studied for signal reconstruction and RIS codebook optimization~\cite{NEURIPS2022_575c4500,11251044}. Physics-informed learning further constrains neural fields using governing equations or electromagnetic models~\cite{schoder2024feasibility,11016226,11172275,Tang2025}. NEAR~\cite{Bu2025NEARNE} reconstructs dense radar-array responses with an untrained coordinate INR, row- and column-wise block-Hankel linear-prediction constraints, and a two-stage optimization that first fits the measured array responses and then jointly estimates the INR and prediction coefficients. V-RIS shares this INR-plus-spatial-recurrence principle, but adapts it to a different sensing interface and deployment objective. The NEAR method fits responses directly measured at selected radar-array locations, whereas a passive RIS with one external receiver provides only configuration-dependent scalar mixtures of the deployed-element field. V-RIS therefore replaces direct sample-domain fitting with a bias-invariant receiver-domain loss that enforces data-level consistency under RIS-coded receiver observations. Its four-corner layout is further designed to retain both consecutive local samples for recurrence-coefficient estimation and long baselines across the full aperture, and the reconstructed virtual-aperture surface field is evaluated through downstream 2D DoA estimation.

\section{System Model}
This section develops the V-RIS-aided DoA estimation framework illustrated in Fig.~\ref{fig_framework}. The objective is to estimate the 2D DoAs of $K$ targets from RIS-coded receiver observations reflected by several distributed RIS subarrays. We first formulate the target--RIS--Rx observation process and relate the virtual-aperture surface field to 2D DoA estimation through the Bartlett spectrum. Based on this formulation, we pose the reconstruction problem and identify three key requirements for reliable reconstruction: data-level consistency with the receiver observations, propagation consistency across the full aperture, and a low-cardinality deployment geometry that supports both.

\begin{figure*}[t]
  \centering
  \includegraphics[width=\textwidth]{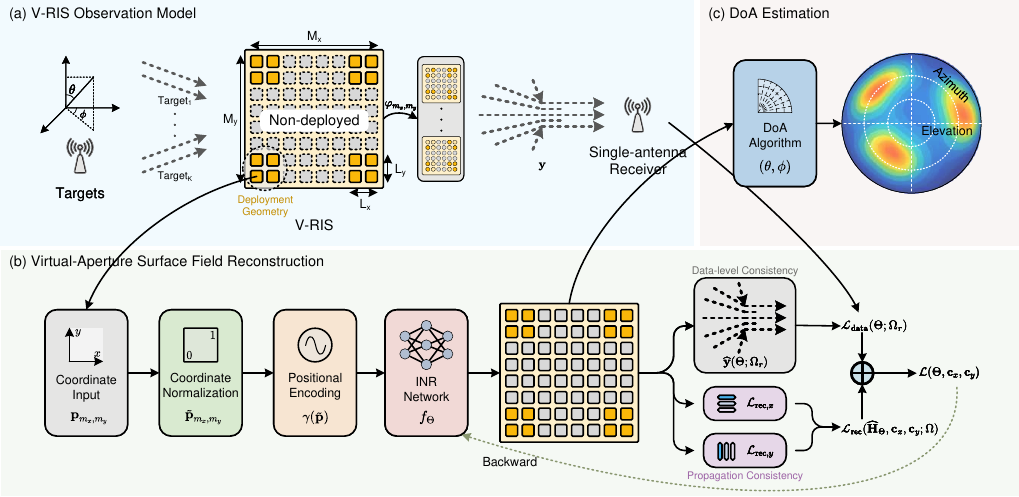}
  \caption{Overview of the proposed V-RIS DoA estimation framework. (a) RIS-coded observation of $K$ far-field targets are acquired using four corner subarrays and a single-antenna receiver. (b) INR-based virtual-aperture surface-field reconstruction with data-level and propagation consistency constraints. (c) 2D DoA estimation from the reconstructed field.}
  \label{fig_framework}
\end{figure*}

\subsection{Signal Model}
\label{subsection:preliminary}
We consider a V-RIS that defines a virtual aperture on the $xy$-plane for DoA estimation, discretized into an $M_x\times M_y$ uniform planar grid with inter-element spacings $d_x$ and $d_y$. Its complete element index set is $\Omega\triangleq\{(m_x,m_y):1\leq m_x\leq M_x,\ 1\leq m_y\leq M_y\}$, where the grid point $(m_x,m_y)\in\Omega$ is located at $\mathbf p_{m_x,m_y}=[(m_x-1)d_x,(m_y-1)d_y,0]^\top$. Thus, the full aperture contains $M=|\Omega|=M_xM_y$ grid points. $\Omega$ is divided into two parts: $\Omega_r\subset\Omega$ contains the indices of the physically deployed and programmable elements, of cardinality $M_r\triangleq|\Omega_r|$, whereas $\Omega_v\triangleq\Omega\setminus\Omega_r$ contains the indices of the virtual elements.

The sensing model is illustrated in Fig.~\ref{fig_framework}(a). Let $N$ denote the number of snapshots, indexed by $n\in\{1,\ldots,N\}$. At the $n$-th snapshot, the RIS applies a programmable phase configuration to the deployed elements in $\Omega_r$, reflecting the signals from the $K$ targets. The resulting RIS-coded observation at the single-antenna receiver is given by~\cite{10243495,10387221,11301934}
\begin{equation}
  \begin{aligned}
  y[n]
  &=\sum_{(m_x,m_y)\in\Omega_r}
  \underbrace{\mathbf G_{m_x,m_y}}_{\text{\textcircled{1} RIS--Rx}}
  \underbrace{e^{j\Phi_{m_x,m_y}[n]}}_{\text{\textcircled{2} RIS}}
  \underbrace{\mathbf H_{m_x,m_y}}_{\text{\textcircled{3} Target--RIS}}
  +\underbrace{z[n]}_{\text{\textcircled{4} Noise}},
  \end{aligned}
  \label{eq:measurement}
\end{equation}
where:
\begin{itemize}
  \item \textit{\textcircled{1} RIS--Rx:} $\mathbf G_{m_x,m_y}$ is assumed to be known due to the fixed RIS and receiver positions, allowing its geometry-dependent free-space response to be computed a priori.
  \item \textit{\textcircled{2} RIS:} $\Phi_{m_x,m_y}[n]$ is the programmed phase of element $(m_x,m_y)$ at the $n$-th snapshot.
  \item \textit{\textcircled{3} Target--RIS:} The propagation path from each target to the receiver comprises a direct link and an RIS-reflected link. The direct links are assumed to be blocked, so the receiver observes the targets only through the reflected links, for which only LoS propagation is considered. Accordingly, $\mathbf H_{m_x,m_y}$ denotes the superposition of the target--RIS LoS components from all $K$ far-field targets at RIS element $(m_x,m_y)$. $\theta_k$ and $\phi_k$ denote the elevation and azimuth angles of the $k$-th target, respectively.
  \item \textit{\textcircled{4} Noise:} $z[n]$ denotes additive noise.
\end{itemize}

The observations over the $N$ snapshots are collected into $\mathbf y=[y[1],\ldots,y[N]]^\top\in\mathbb C^{N\times1}$.

Given the surface field $\mathbf H$, the 2D Bartlett spectrum~\cite{VanTrees2002OptimumArray} is
\begin{equation}
  P(\theta,\phi)
  =\frac{\left|\left\langle\mathbf S(\theta,\phi),\mathbf H\right\rangle_{\mathrm F}\right|^2}
  {\left\|\mathbf S(\theta,\phi)\right\|_{\mathrm F}^2},
  \label{eq:bartlett_spectrum}
\end{equation}
where $\langle \cdot,\cdot\rangle_{\mathrm F}$ is the Frobenius inner product, $|\cdot|$ denotes the modulus of a complex scalar, $\|\cdot\|_{\mathrm F}$ is the Frobenius norm, and $\mathbf S(\theta,\phi)\in\mathbb C^{M_x\times M_y}$ is the steering matrix defined by $[\mathbf S(\theta,\phi)]_{m_x,m_y} = \exp(-j\frac{2\pi}{\lambda}[(m_x-1)d_x\sin\theta\cos\phi+(m_y-1)d_y\sin\theta\sin\phi])$.

Our objective is to estimate the 2D DoAs of $K$ targets by reconstructing the virtual-aperture surface field $\mathbf H$ from the sparse receiver observations $\mathbf y$, and then identifying the DoAs from the peaks of the Bartlett spectrum in \eqref{eq:bartlett_spectrum}. Since \eqref{eq:measurement} involves only the deployed elements indexed by $\Omega_r$, leaving the field entries on $\Omega_v$ unobserved. Recovering the full-aperture estimate $\widehat{\mathbf H}$ therefore requires jointly accounting for the receiver observations, the spatial structure across the full aperture, and the geometry of deployed elements. The recovery problem will be formulated in the following subsection.

\subsection{Problem Formulation}
\label{subsection:problem}
The core problem in V-RIS is to reconstruct the virtual-aperture surface field from the receiver observations while determining a low-cardinality deployment. Specifically, we jointly seek a full-aperture estimate $\widehat{\mathbf H}\in\mathbb C^{M_x\times M_y}$ and a deployed set $\Omega_r$ such that the deployed entries produce receiver-domain predictions consistent with $\mathbf y$ through \eqref{eq:measurement}, while the full-aperture estimate obeys the same far-field propagation law as the true surface field $\mathbf H$.

Each receiver observation in \eqref{eq:measurement} depends only on the $M_r$ entries indexed by $\Omega_r$, whereas the remaining entries on $\Omega_v$ do not appear in the observation model. Thus, when $M_r\ll M$, measurement consistency alone is insufficient to recover the full-aperture field. An additional physics-based constraint is required to restrict the admissible full-aperture fields and thereby link the deployed and virtual entries, while $\Omega_r$ must be chosen to support this linkage. This leads to the following joint reconstruction and deployment formulation:
\begin{subequations}
\begin{align}
\text{find}\quad
& \widehat{\mathbf H}\in\mathbb C^{M_x\times M_y},\ \Omega_r
\label{eq:compatibility_a}
\\[0.2em]
\text{s.t.}\quad
& \bigl\|\mathbf y-
  \widehat{\mathbf y}(\widehat{\mathbf H};\Omega_r)\bigr\|_2^2
\leq \varepsilon_{\mathrm{data}}
\label{eq:compatibility_b}\\
& \mathcal C_{\mathrm{phy}}(\widehat{\mathbf H};\Omega)
  \leq \varepsilon_{\mathrm{prop}}
\label{eq:compatibility_c}\\
& \Omega_r\in\mathcal D_{M_r}
\label{eq:compatibility_d}
\end{align}
\label{eq:sparse_to_virtual_compatibility}
\end{subequations}
Three requirements for the reconstruction are specified in \eqref{eq:sparse_to_virtual_compatibility}:
\begin{enumerate}
  \item [\eqref{eq:compatibility_b}] \textit{Data-level consistency:} the reconstructed field entries on $\Omega_r$ must explain the receiver observations through the target--RIS--Rx cascade in \eqref{eq:measurement}, i.e., $\mathbf y\approx\widehat{\mathbf y}(\widehat{\mathbf H};\Omega_r)$. Here, $\widehat{\mathbf y}(\widehat{\mathbf H};\Omega_r)$ is obtained from \eqref{eq:measurement} by replacing $\mathbf H_{m_x,m_y}$ with its estimate $\widehat{\mathbf H}_{m_x,m_y}$, and $\varepsilon_{\mathrm{data}}$ is the tolerance for the receiver-domain data mismatch.
  \item [\eqref{eq:compatibility_c}] \textit{Propagation consistency:} Since the true full-aperture field $\mathbf H$ is unknown, the reconstruction error is not directly accessible. We therefore express propagation consistency in \eqref{eq:compatibility_c} through a physics-based measure $\mathcal C_{\mathrm{phy}}$, which evaluates the violation of the assumed propagation model, with $\varepsilon_{\mathrm{prop}}$ specifying the admissible tolerance.
  \item [\eqref{eq:compatibility_d}] \textit{Deployment geometry:} $\mathcal D_{M_r}$ denotes the feasible set of deployment geometries with low cardinality $M_r\ll M$ that can support both the data-level consistency in \eqref{eq:compatibility_b} and the propagation consistency in \eqref{eq:compatibility_c}. A feasible deployed set $\Omega_r$ is selected offline before collecting $\mathbf y$ and remains unchanged during surface-field reconstruction.
\end{enumerate}

With the observation model and the three reconstruction requirements established, we next develop the V-RIS-aided DoA estimation method accordingly, thereby completing the overall DoA estimation pipeline.

\section{V-RIS-aided DoA estimation}
\label{section:method}
To satisfy the three requirements in \eqref{eq:sparse_to_virtual_compatibility}, we develop a dual-consistency reconstruction framework for virtual-aperture reconstruction. Data-level consistency is enforced through the INR parameterization and a bias-invariant receiver-domain loss in Section~\ref{subsection:inr_parameterization}, while propagation consistency is imposed through the finite-order spatial recurrence developed in Section~\ref{subsection:propagation_consistency}. To enable a practical low-cardinality deployment, a four-corner sampling layout is adopted in Section~\ref{subsection:sampling_geometry}. These components are integrated into a unified optimization problem and solved via the two-stage reconstruction procedure presented in Section~\ref{subsection:joint_optimization}. Finally, the reconstructed virtual-aperture surface field is used for 2D DoA estimation based on the Bartlett spectrum in Section~\ref{subsection:doa}.

\subsection{Data-Level Consistency: INR and Bias-Invariant Loss}
\label{subsection:inr_parameterization}
Enforcing data-level consistency requires the reconstructed virtual-aperture surface field to reproduce the receiver observations through the target--RIS--Rx cascade in \eqref{eq:measurement}. This requirement presents two challenges:
\begin{itemize}
  \item \textit{Surface-field representation:} the field contains $M$ complex entries, whereas $\mathbf y$ provides only indirect receiver observations through the deployed set $\Omega_r$. The complete field therefore be represented without learning every entry independently from these sparse receiver observations.
  \item \textit{Receiver-domain bias:} practical receiver observations differ from the nominal cascade response because of an unknown global complex gain and a configuration-invariant additive contribution. Direct fitting would transfer these nuisance effects to the reconstructed field.
\end{itemize}
We address the two challenges with an INR and a bias-invariant receiver-domain loss, respectively.

\textbf{INR.} The representation challenge arises because the full aperture contains $M$ complex field values, while $\mathbf y$ constrains them only through receiver observations over $\Omega_r$. Treating these values as independent variables would ignore their common coordinate domain and be difficult under sparse receiver observations. We therefore use an INR, which shares one set of parameters across the aperture and assigns a complex field value to every deployed and virtual coordinate through a coordinate-to-value mapping~\cite{NEURIPS2022_575c4500}.

We parameterize the surface field by the coordinate-to-field map $f_\Theta:[0,1]^2\rightarrow\mathbb C$, where $\Theta$ denotes the learnable parameters. To capture high-frequency spatial variations, the grid coordinate is first normalized as $\widetilde{\mathbf p}_{m_x,m_y}=[(m_x-1)/(M_x-1),(m_y-1)/(M_y-1)]^\top\in[0,1]^2$ and mapped by the positional encoding~\cite{mildenhall2020nerf}
\begin{equation}
  \begin{aligned}
  \gamma(\widetilde{\mathbf p})
  =[
  &\sin(\pi\widetilde{\mathbf p}), \cos(\pi\widetilde{\mathbf p}), \cdots, 
  \\&\sin(2^{B-1}\pi\widetilde{\mathbf p}), \cos(2^{B-1}\pi\widetilde{\mathbf p})
  ]
  ^\top \in\mathbb R^{4B},
  \end{aligned}
  \label{eq:fourier_encoding}
\end{equation}
where the sine and cosine functions are applied elementwise and $B$ denotes the number of frequency levels in the encoding. Since the field is complex-valued, we parameterize its real and imaginary parts separately:
\begin{equation}
  f_\Theta(\widetilde{\mathbf p})
  =f_{\Theta,\Re}(\gamma(\widetilde{\mathbf p}))
  +j f_{\Theta,\Im}(\gamma(\widetilde{\mathbf p})),
\end{equation}
where $f_{\Theta,\Re},f_{\Theta,\Im}:\mathbb R^{4B}\rightarrow\mathbb R$ are multilayer perceptrons (MLP), each with hidden layers of width 256 and ReLU activations. Evaluating the INR over the aperture yields
\begin{equation}
  \widehat{\mathbf H}_\Theta
  =[f_\Theta(\widetilde{\mathbf p}_{m_x,m_y})]_{m_x,m_y}
  \in\mathbb C^{M_x\times M_y}.
  \label{eq:inr_field_map}
\end{equation}

The INR surface field in \eqref{eq:inr_field_map} is mapped to the nominal receiver response through the target--RIS--Rx cascade as \eqref{eq:measurement}:
\begin{equation}
  \widehat{y}[n](\Theta;\Omega_r)
  =\sum_{(m_x,m_y)\in\Omega_r}
  \mathbf G_{m_x,m_y}\cdot e^{j\Phi_{m_x,m_y}[n]}\widehat{\mathbf H}_{\Theta,m_x,m_y}.
  \label{eq:inr_predicted_measurement}
\end{equation}

\textbf{Bias-invariant loss.} The INR provides a nominal receiver prediction through \eqref{eq:inr_predicted_measurement}, but directly fitting this prediction to practical receiver observations can distort the reconstructed field. Unknown amplitude and phase offsets produce a global complex scaling, while quasi-static coupling, residual leakage, and configuration-invariant multipath introduce an additive contribution shared across RIS configurations. 

We relate the receiver observations to the INR prediction through the affine model
\begin{equation}
  y[n]
  =\rho\widehat{y}[n](\Theta;\Omega_r)+\nu+z[n],
  \label{eq:affine_measurement_model}
\end{equation}
where $\rho\in\mathbb C$ captures the constant amplitude and phase offsets, $\nu\in\mathbb C$ accounts for the configuration-invariant additive contribution. Rather than treating $\rho$ and $\nu$ as trainable variables, we eliminate them analytically and use the minimized residual as the bias-invariant receiver-domain loss.

\begin{proposition}[Closed-form receiver-domain affine alignment]
  \label{prop:affine_alignment}
  Given the observed vector $\mathbf y$ and an INR prediction $\widehat{\mathbf y}(\Theta;\Omega_r)$, let $\mathbf P=\mathbf I_N-\mathbf 1_N\mathbf 1_N^\top/N$, $\mu_y=\mathbf 1_N^\top\mathbf y/N$, and $\mu_{\widehat y}=\mathbf 1_N^\top\widehat{\mathbf y}(\Theta;\Omega_r)/N$. If $\mathbf P\widehat{\mathbf y}(\Theta;\Omega_r)\ne\mathbf 0$, the solution to
  \begin{equation}
    \min_{\rho,\nu\in\mathbb C}\frac{1}{N}
    \left\|\mathbf y-\rho\widehat{\mathbf y}(\Theta;\Omega_r)-\nu\mathbf 1_N\right\|_2^2
  \end{equation}
  is
  \begin{equation}
    \begin{aligned}
    \rho^\star(\Theta;\Omega_r)
    &=\frac{\widehat{\mathbf y}(\Theta;\Omega_r)^H\mathbf P\mathbf y}
    {\widehat{\mathbf y}(\Theta;\Omega_r)^H\mathbf P\widehat{\mathbf y}(\Theta;\Omega_r)},\\
    \nu^\star(\Theta;\Omega_r)
    &=\mu_y-\rho^\star(\Theta;\Omega_r)\mu_{\widehat y}.
    \end{aligned}
    \label{eq:affine_alignment}
  \end{equation}
\end{proposition}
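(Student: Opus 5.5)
The plan is to treat this as a complex linear least-squares problem in the two unknowns $(\rho,\nu)$ and eliminate $\nu$ first. Write $\widehat{\mathbf y}$ for $\widehat{\mathbf y}(\Theta;\Omega_r)$ and $J(\rho,\nu)=\frac1N\|\mathbf y-\rho\widehat{\mathbf y}-\nu\mathbf 1_N\|_2^2$. The factor $1/N$ does not change the minimizer, so it can be dropped. The first step is the orthogonal decomposition $\mathbb C^N=\mathrm{span}(\mathbf 1_N)\oplus\mathrm{range}(\mathbf P)$. Here $\mathbf P$ is the Hermitian idempotent projector onto the orthogonal complement of $\mathbf 1_N$, with $\mathbf I_N-\mathbf P=\mathbf 1_N\mathbf 1_N^\top/N$.

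Applying this decomposition to the residual $\mathbf r=\mathbf y-\rho\widehat{\mathbf y}-\nu\mathbf 1_N$ and using Pythagoras gives
\begin{equation*}
\|\mathbf r\|_2^2=\|\mathbf P\mathbf y-\rho\mathbf P\widehat{\mathbf y}\|_2^2+N\,|\mu_y-\rho\mu_{\widehat y}-\nu|^2,
\end{equation*}
since $\mathbf P\mathbf 1_N=\mathbf 0$ and $(\mathbf I_N-\mathbf P)\mathbf r=(\mu_y-\rho\mu_{\widehat y}-\nu)\mathbf 1_N$. The second term involves $\nu$ only, and for any fixed $\rho$ it vanishes exactly when $\nu=\mu_y-\rho\mu_{\widehat y}$. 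This yields the stated $\nu^\star$ once $\rho^\star$ is known. The remaining problem, $\min_\rho\|\mathbf P\mathbf y-\rho\mathbf P\widehat{\mathbf y}\|_2^2$, is a one-dimensional complex projection.

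To solve it, I would avoid Wirtinger calculus and complete the square directly. Let $a=\|\mathbf P\widehat{\mathbf y}\|_2^2=\widehat{\mathbf y}^H\mathbf P\widehat{\mathbf y}$, which is strictly positive by the hypothesis $\mathbf P\widehat{\mathbf y}\neq\mathbf 0$, and let $b=(\mathbf P\widehat{\mathbf y})^H\mathbf P\mathbf y=\widehat{\mathbf y}^H\mathbf P\mathbf y$, using $\mathbf P^H\mathbf P=\mathbf P$. Then
\begin{equation*}
\|\mathbf P\mathbf y-\rho\mathbf P\widehat{\mathbf y}\|_2^2=a\,|\rho-b/a|^2+\|\mathbf P\mathbf y\|_2^2-|b|^2/a.
\end{equation*}
This is uniquely minimized at $\rho^\star=b/a$, which matches \eqref{eq:affine_alignment}. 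Joint minimization over $(\rho,\nu)$ is therefore attained, and it is attained uniquely, at $(\rho^\star,\mu_y-\rho^\star\mu_{\widehat y})$, because each of the two nonnegative terms is minimized simultaneously.

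The only step needing care is the separation argument: the cross term between the centered component and the mean component of the residual must vanish for every $(\rho,\nu)$, not just at the optimum. This follows from $\mathbf P$ being Hermitian, idempotent, and annihilating $\mathbf 1_N$. I would also note in passing that the hypothesis $\mathbf P\widehat{\mathbf y}\neq\mathbf 0$ is exactly what is needed. If $\widehat{\mathbf y}$ were constant across configurations, $\rho$ and $\nu$ would not be identifiable, and the minimizer would be non-unique.
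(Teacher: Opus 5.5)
Your proposal is correct and follows essentially the same route as the paper's appendix proof: center with $\mathbf P$, split the residual by Pythagoras into $\|\mathbf P\mathbf y-\rho\mathbf P\widehat{\mathbf y}\|_2^2+N|\mu_y-\rho\mu_{\widehat y}-\nu|^2$, eliminate $\nu$, then solve the one-dimensional least-squares problem for $\rho$. Your explicit completion of the square only spells out the uniqueness that the paper asserts as ``the unique least-squares solution.''
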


\begin{proof}
  See Appendix~\ref{Appendix:proof_LS}.
\end{proof}

The minimized value in Proposition~\ref{prop:affine_alignment} gives the bias-invariant receiver-domain loss
\begin{equation}
  \mathcal L_{\mathrm{data}}(\Theta;\Omega_r)
  =\frac{1}{N}\Bigl\|\mathbf P\mathbf y-\rho^\star(\Theta;\Omega_r)\mathbf P\widehat{\mathbf y}(\Theta;\Omega_r)\Bigr\|_2^2,
  \label{eq:BI_loss}
\end{equation}

Centering removes any configuration-invariant additive offset, while $\rho^\star(\Theta;\Omega_r)$ compensates for the unknown global complex gain mismatch between the observed and predicted samples. Consequently, the receiver observations cannot determine the common amplitude and phase of $\widehat{\mathbf H}$, because this common factor can be absorbed into $\rho$. This factor leaves the relative spatial pattern across the aperture unchanged and therefore does not alter the DoA estimates.

This receiver-domain formulation establishes data-level consistency. The next subsection introduces propagation consistency to constrain the unobserved entries on $\Omega_v$.

\subsection{Propagation Consistency: Spatial Recurrence}
\label{subsection:propagation_consistency}
The data-level consistency enforced by $\mathcal L_{\mathrm{data}}$ constrains $\widehat{\mathbf H}_\Theta$ only through the response of its entries on $\Omega_r$ and thus provides no direct constraint on $\Omega_v$. To propagate the information supported by the receiver observations from $\Omega_r$ to $\Omega_v$, the reconstruction must satisfy propagation consistency.

\begin{proposition}[Finite-order spatial recurrence under far-field condition~\cite{Bu2025NEARNE}]
  \label{prop:recurrence}
  If the target--RIS link contains $K$ far-field targets, there exist $\mathbf c_x,\mathbf c_y\in\mathbb C^{K\times1}$ such that, for valid indices:
  \begin{align}
    \mathbf H_{m_x,m_y}
    &=\sum_{k=1}^{K}c_{x,k}\mathbf H_{m_x-k,m_y},
    &&m_x>K,
    \label{eq:rec_x}\\
    \mathbf H_{m_x,m_y}
    &=\sum_{k=1}^{K}c_{y,k}\mathbf H_{m_x,m_y-k},
    &&m_y>K.
    \label{eq:rec_y}
  \end{align}
\end{proposition}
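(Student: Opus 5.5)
The plan is to make the far-field structure of $\mathbf H$ explicit and then read off the recurrences from the characteristic (annihilating) polynomial of a sum of exponentials. The first step is to write the target--RIS LoS field of $K$ far-field narrowband sources as a superposition of planar phase ramps:
\begin{equation*}
  \mathbf H_{m_x,m_y}=\sum_{k=1}^{K}\alpha_k\, z_{x,k}^{\,m_x-1} z_{y,k}^{\,m_y-1},
\end{equation*}
with
\begin{equation*}
  z_{x,k}=e^{-j\frac{2\pi}{\lambda}d_x\sin\theta_k\cos\phi_k},\qquad
  z_{y,k}=e^{-j\frac{2\pi}{\lambda}d_y\sin\theta_k\sin\phi_k}.
\end{equation*}
Here $\alpha_k\in\mathbb C$ collects the complex amplitude of target $k$, its path loss and the common phase at the reference element. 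This form follows from the same plane-wave phase progression used in the steering matrix $\mathbf S(\theta,\phi)$ of \eqref{eq:bartlett_spectrum}. The far-field assumption is what makes the per-element phase exactly affine in $(m_x,m_y)$ and the amplitude constant across the aperture.

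Next I will treat the $x$-direction with $m_y$ fixed. In that case the column $m_x\mapsto\mathbf H_{m_x,m_y}$ equals $\sum_k\beta_k(m_y)\,z_{x,k}^{m_x-1}$, where $\beta_k(m_y)=\alpha_k z_{y,k}^{m_y-1}$. I will define the monic polynomial
\begin{equation*}
  p_x(z)=\prod_{k=1}^{K}(z-z_{x,k})=z^K-\sum_{k=1}^{K}c_{x,k}z^{K-k}.
\end{equation*}
Because each $z_{x,k}$ is a root of $p_x$, multiplying $p_x(z_{x,k})=0$ by $z_{x,k}^{m_x-1-K}$ gives $z_{x,k}^{m_x-1}=\sum_{i}c_{x,i}z_{x,k}^{m_x-1-i}$ for every $m_x>K$. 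Summing this identity over $k$ with weights $\beta_k(m_y)$ yields \eqref{eq:rec_x}.

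The key point is that $\mathbf c_x$ depends only on the roots $\{z_{x,k}\}$, so it depends only on the target angles and not on $m_y$ or on the amplitudes $\alpha_k$. A single coefficient vector therefore works for every row, which is exactly what the statement claims. The $y$-recurrence \eqref{eq:rec_y} follows identically, with $p_y(z)=\prod_k(z-z_{y,k})$ and $m_x$ fixed.

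I expect the main obstacle to be the degenerate cases rather than the algebra. Two targets may share the same $z_{x,k}$, for example when they have equal $\sin\theta\cos\phi$ but different azimuths, or some $\alpha_k$ may vanish. In these cases I will note that $p_x$ still annihilates the sequence: a repeated factor only means the minimal recurrence has order below $K$. Padding the minimal coefficient vector with zeros, or simply keeping the full product with repeated roots, still gives a valid $\mathbf c_x\in\mathbb C^{K}$, so existence is preserved. Finally, "valid indices" means $m_x-k\ge1$ (respectively $m_y-k\ge1$), which is guaranteed by the conditions $m_x>K$ and $m_y>K$.
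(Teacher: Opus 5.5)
Your proof is correct: writing $\mathbf H$ as a sum of $K$ separable exponentials $\alpha_k z_{x,k}^{m_x-1}z_{y,k}^{m_y-1}$ and using the annihilating polynomial $\prod_k(z-z_{x,k})$ (resp.\ $\prod_k(z-z_{y,k})$) establishes both recurrences, and because its coefficients depend only on the angles they are shared across all rows (columns), which is exactly the Prony/linear-prediction argument behind the result. The paper gives no proof of its own and simply cites the NEAR reference, whose recurrence rests on this same fact, so your route coincides with the intended one; your handling of coincident roots is also sound.
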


Proposition~\ref{prop:recurrence} reveals that the matrix entries $\mathbf H_{m_x,m_y}$ satisfy finite-order spatial recurrences along both aperture axes. To impose these dependencies on the reconstructed field, let $\widehat{\mathbf H}_{m_x,m_y}$ denote its value at grid point $(m_x,m_y)$ and define the recurrence residuals over all valid grid points as
\begin{equation}
  \begin{aligned}
    \mathcal L_{\mathrm{rec},x}(\widehat{\mathbf H},\mathbf c_x;\Omega)
    &=\sum_{m_y=1}^{M_y}\sum_{m_x=K+1}^{M_x}\\
    &\left|\widehat{\mathbf H}_{m_x,m_y}-\sum_{k=1}^{K}c_{x,k}\widehat{\mathbf H}_{m_x-k,m_y}\right|^2,\\
    \mathcal L_{\mathrm{rec},y}(\widehat{\mathbf H},\mathbf c_y;\Omega)
    &=\sum_{m_x=1}^{M_x}\sum_{m_y=K+1}^{M_y}\\
    &\left|\widehat{\mathbf H}_{m_x,m_y}-\sum_{k=1}^{K}c_{y,k}\widehat{\mathbf H}_{m_x,m_y-k}\right|^2.
  \end{aligned}
  \label{eq:Lphy}
\end{equation}
where $\mathbf c_x,\mathbf c_y\in\mathbb C^{K\times1}$ are learnable recurrence coefficients. The total recurrence loss is
\begin{equation}
  \mathcal L_{\mathrm{rec}}(\widehat{\mathbf H},\mathbf c_x,\mathbf c_y;\Omega)
  =\mathcal L_{\mathrm{rec},x}(\widehat{\mathbf H},\mathbf c_x;\Omega)+\mathcal L_{\mathrm{rec},y}(\widehat{\mathbf H},\mathbf c_y;\Omega).
  \label{eq:Lrec}
\end{equation}

Because the same coefficients are shared over $\Omega$, this loss couples the entries on $\Omega_r$ and $\Omega_v$ and acts as a global structural regularizer rather than a local interpolation loss. Under far-field conditions, the true surface field satisfies $\mathcal L_{\mathrm{rec}}(\mathbf H,\mathbf c_x,\mathbf c_y;\Omega)=0$. When the number of snapshots is limited, possibly even smaller than $M_r$, this regularization complements the bias-invariant receiver-domain loss.

In \eqref{eq:Lphy}, the recurrence order determines the amount of local support required by the model, so the choice of $K$ must be specified. We set $K$ to the number of synthesized targets in simulations. In deployment, candidate values of $K$ can be selected using scene knowledge or receiver-domain validation residuals evaluated on reserved RIS configurations. When an initial deployed field estimate is available, model-order selection criteria such as the Akaike information criterion (AIC) and minimum description length (MDL) can be used to estimate the effective target count~\cite{Wax1985Detection}.

Moreover, the sampling requirement imposed by the spatial recurrence is revealed by \eqref{eq:rec_x} and \eqref{eq:rec_y}. Estimating $\mathbf c_x$ or $\mathbf c_y$ requires $K+1$ field entries involved in each recurrence equation to be available from the receiver observations. Therefore, $\Omega_r$ must contain local runs of at least $K+1$ consecutive entries along both axes. The next subsection combines this observation-supported local requirement with the aperture-spread requirement for DoA sensing.

\subsection{Deployment Geometry: Four-Corner Layout}
\label{subsection:sampling_geometry}
The physical deployment determines how reliably the virtual aperture can be synthesized. To synthesize the virtual aperture from a low-cardinality deployment, the deployment geometry must satisfy two requirements:
\begin{itemize}
  \item \textit{Global aperture:} preserve long spatial baselines along both axes for DoA resolution;
  \item \textit{Local recurrence:} provide locally consecutive samples along both axes for estimating $\mathbf c_x$ and $\mathbf c_y$.
\end{itemize}
Figure~\ref{fig:four corner} illustrates how these two requirements jointly constrain the deployment to the four-corner layout.

\begin{figure}
  \centering
  \includegraphics[width=0.9\linewidth]{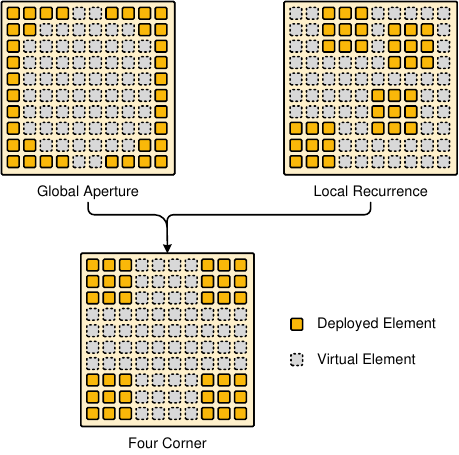}
  \caption{Global aperture favors boundary elements to preserve long baselines, whereas local recurrence requires contiguous sampling along both axes, jointly motivating the deployed of four contiguous corner subarrays.}
  \label{fig:four corner}
\end{figure}

\textbf{Global aperture.} We characterize the long-baseline requirement through the direction-cosine CRBs, which relate the spatial distribution of the deployed coordinates to the available DoA information along the two aperture axes.

\begin{proposition}[Geometry-dependent direction-cosine CRBs for low-cardinality deployments~\cite{Ma2024MovableSensing}]
  \label{prop:deployment_geometry}
  For a low-cardinality deployed set $\Omega_r$ with $M_r=|\Omega_r|$, let $\mathbf x_{\Omega_r}$ and $\mathbf y_{\Omega_r}$ collect its physical $x$- and $y$-coordinates, and let $\mathbf B_{\Omega_r}=M_r^{-1}\mathbf I_{M_r}-M_r^{-2}\mathbf 1_{M_r}\mathbf 1_{M_r}^\top$. The layout-dependent terms of the UPA direction-cosine CRBs are
  \begin{equation}
    \begin{aligned}
    \mathrm{CRB}_{u}(\Omega_r)
    &\propto
    \left[
    \mathbf x_{\Omega_r}^{\top}\mathbf B_{\Omega_r}\mathbf x_{\Omega_r}
    -\frac{(\mathbf x_{\Omega_r}^{\top}\mathbf B_{\Omega_r}\mathbf y_{\Omega_r})^2}
    {\mathbf y_{\Omega_r}^{\top}\mathbf B_{\Omega_r}\mathbf y_{\Omega_r}}
    \right]^{-1},\\
    \mathrm{CRB}_{v}(\Omega_r)
    &\propto
    \left[
    \mathbf y_{\Omega_r}^{\top}\mathbf B_{\Omega_r}\mathbf y_{\Omega_r}
    -\frac{(\mathbf x_{\Omega_r}^{\top}\mathbf B_{\Omega_r}\mathbf y_{\Omega_r})^2}
    {\mathbf x_{\Omega_r}^{\top}\mathbf B_{\Omega_r}\mathbf x_{\Omega_r}}
    \right]^{-1}.
    \end{aligned}
    \label{eq:geometry_information}
  \end{equation}
\end{proposition}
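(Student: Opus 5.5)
The plan is the standard deterministic single-source Fisher-information argument for a planar array, specialized to the deployed coordinates. Parameterize the field at the deployed elements by the direction cosines $u=\sin\theta\cos\phi$ and $v=\sin\theta\sin\phi$, together with a complex amplitude $\alpha=|\alpha|e^{j\psi}$. For $(m_x,m_y)\in\Omega_r$ write
\[
\mathbf H_{m_x,m_y}=\alpha\exp\bigl(-j\kappa(x_{m}u+y_{m}v)\bigr),\qquad \kappa=2\pi/\lambda.
\]
The samples are observed in circular Gaussian noise of variance $\sigma^2$. The parameter vector is $\boldsymbol\eta=[u,v,\psi,|\alpha|]^\top$.

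\textbf{Computing the Fisher information.} Use the Slepian--Bangs form $\mathbf J=\frac{2}{\sigma^2}\Re\{\mathbf D^H\mathbf D\}$, where $\mathbf D$ stacks the derivatives of the mean vector. The derivatives are:
\begin{itemize}
\item with respect to $u$: $-j\kappa\,\mathrm{diag}(\mathbf x_{\Omega_r})\mathbf h$;
\item with respect to $v$: $-j\kappa\,\mathrm{diag}(\mathbf y_{\Omega_r})\mathbf h$;
\item with respect to $\psi$: $j\mathbf h$;
\item with respect to $|\alpha|$: $\mathbf h/|\alpha|$.
\end{itemize}
Each entry of $\mathbf h$ has modulus $|\alpha|$. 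Therefore $|\alpha|$ decouples, since its cross terms are purely imaginary before taking $\Re$. The remaining $3\times3$ block is
\[
\frac{2|\alpha|^2}{\sigma^2}\begin{bmatrix}\kappa^2\mathbf x^\top\mathbf x&\kappa^2\mathbf x^\top\mathbf y&-\kappa\mathbf 1^\top\mathbf x\\ \kappa^2\mathbf x^\top\mathbf y&\kappa^2\mathbf y^\top\mathbf y&-\kappa\mathbf 1^\top\mathbf y\\ -\kappa\mathbf 1^\top\mathbf x&-\kappa\mathbf 1^\top\mathbf y&M_r\end{bmatrix}.
\]

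\textbf{Eliminating the nuisance phase.} Take the Schur complement with respect to the phase entry $M_r$. This replaces $\mathbf a^\top\mathbf b$ by
\[
\mathbf a^\top\mathbf b-\frac{(\mathbf 1^\top\mathbf a)(\mathbf 1^\top\mathbf b)}{M_r}=M_r\,\mathbf a^\top\mathbf B_{\Omega_r}\mathbf b.
\]
This is exactly where the centering matrix $\mathbf B_{\Omega_r}$ enters. The equivalent Fisher information for $(u,v)$ is thus
\[
c\begin{bmatrix}\mathbf x^\top\mathbf B\mathbf x&\mathbf x^\top\mathbf B\mathbf y\\ \mathbf x^\top\mathbf B\mathbf y&\mathbf y^\top\mathbf B\mathbf y\end{bmatrix},
\]
with $c=2|\alpha|^2\kappa^2M_r/\sigma^2$ independent of the layout.

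\textbf{Reading off the bounds.} Invert the $2\times2$ matrix. Its $(1,1)$ entry is $\mathbf y^\top\mathbf B\mathbf y/\det$. Since $\det=\mathbf x^\top\mathbf B\mathbf x\,\mathbf y^\top\mathbf B\mathbf y-(\mathbf x^\top\mathbf B\mathbf y)^2$, this entry equals
\[
\Bigl[\mathbf x^\top\mathbf B\mathbf x-\frac{(\mathbf x^\top\mathbf B\mathbf y)^2}{\mathbf y^\top\mathbf B\mathbf y}\Bigr]^{-1},
\]
and the $(2,2)$ entry follows symmetrically. This yields \eqref{eq:geometry_information}, with the proportionality constant $1/c$ absorbed.

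\textbf{Main obstacle.} The main delicacy is justifying the signal model to which the CRB is applied. The proposition concerns layout-dependent terms, so I would state the bound for a single source observed directly at the deployed elements, as in~\cite{Ma2024MovableSensing}. The coded single-receiver observation and multi-target cross terms are treated as layout-independent scalings. I would also check nondegeneracy: $\mathbf y^\top\mathbf B\mathbf y>0$, $\mathbf x^\top\mathbf B\mathbf x>0$, and $\det>0$ are required. These hold whenever the deployed coordinates are not collinear, which the four-corner layout satisfies.
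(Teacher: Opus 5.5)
Your derivation is correct, and since the paper gives no proof of this proposition (it simply imports the bound from the cited movable-antenna sensing reference), there is no different route to compare against: your single-source Slepian--Bangs computation, with the nuisance phase eliminated by a Schur complement to produce $\mathbf B_{\Omega_r}$, is the standard argument behind that bound. Two nuances: your constant $c=2|\alpha|^2\kappa^2M_r/\sigma^2$ still carries $M_r$, so it is layout-independent only at fixed cardinality, which is exactly how the paper uses the result. Your caveat is also apt: the direct element-wise, single-target model is not the coded single-receiver, multi-target setting of V-RIS, so the proposition only gives layout guidance rather than a bound for the actual observations.
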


For a given deployment cardinality $M_r$, \eqref{eq:geometry_information} shows that reducing both CRBs requires enlarging the two centered coordinate spreads while suppressing the centered cross term. If $\Omega_r$ is symmetric about both aperture axes, then $\mathbf x_{\Omega_r}^{\top}\mathbf B_{\Omega_r}\mathbf y_{\Omega_r}=0$. Within this symmetric class, placing support near the extreme coordinates of only one axis enlarges only the corresponding spread. Since the same $M_r$ elements must enlarge both spreads, they should lie near the extreme coordinates of both axes simultaneously. The global CRB requirement therefore confines the support to the four corner regions of the aperture.

\textbf{Local recurrence.} The recurrence model then determines how these corner regions must be populated. As established by \eqref{eq:rec_x} and \eqref{eq:rec_y}, estimating the unknown recurrence coefficients requires data-supported runs of at least $K+1$ consecutive entries along both axes. Hence, the corner support cannot consist of isolated elements. Each corner region must form a contiguous subarray with $L_x>K$ and $L_y>K$.

\textbf{Four-corner layout.} Combining this local condition with the symmetric corner regions dictated by \eqref{eq:geometry_information} specifies four $L_x\times L_y$ corner subarrays. For $\zeta\in\{x,y\}$, define the corresponding edge bands and their Cartesian product as
\begin{equation}
  \begin{aligned}
  \mathcal E_{\zeta}(L_{\zeta})
  &=\{1,\ldots,L_{\zeta}\}\cup
  \{M_{\zeta}-L_{\zeta}+1,\ldots,M_{\zeta}\},\\
  \Omega_r
  &=\mathcal E_x(L_x)\times\mathcal E_y(L_y).
  \end{aligned}
  \label{eq:subarray_sampling}
\end{equation}
The conditions $2L_x<M_x$ and $2L_y<M_y$ keep the edge bands disjoint and give $M_r=4L_xL_y$. In \eqref{eq:subarray_sampling}, the Cartesian product of the symmetric $x$- and $y$-edge bands places every deployed element near the extreme coordinates of both axes, while $L_x>K$ and $L_y>K$ ensure that each corner block contains the field entry and its $K$ shifted predecessors required by \eqref{eq:rec_x} and \eqref{eq:rec_y}. Thus, the CRB geometry and spatial-recurrence requirements jointly determine the four-corner layout.

\subsection{Virtual-Aperture Reconstruction Objective and Procedure}
\label{subsection:joint_optimization}
The complete reconstruction flow in Fig.~\ref{fig_framework}(b) integrates the INR parameterization with the data-level and propagation-consistency mechanisms. Together with the four-corner layout, these components jointly realize the three reconstruction requirements in \eqref{eq:sparse_to_virtual_compatibility}. We formulate the resulting reconstruction as an optimization problem that minimizes the receiver-domain data error and the residual of the recurrence constraint:
\begin{equation}
  \min_{\Theta,\mathbf c_x,\mathbf c_y}\quad
  \mathcal L_{\mathrm{data}}(\Theta;\Omega_r)+\lambda_{\mathrm{rec}}\mathcal L_{\mathrm{rec}}(\widehat{\mathbf H}_\Theta,\mathbf c_x,\mathbf c_y;\Omega),
  \label{eq:main_objective}
\end{equation}
where $\lambda_{\mathrm{rec}}$ weights the recurrence constraint. In \eqref{eq:main_objective}, $\mathcal L_{\mathrm{data}}$ enforces data-level consistency through the target--RIS--Rx response, while $\mathcal L_{\mathrm{rec}}$ instantiates the generic physical-consistency measure $\mathcal C_{\mathrm{phy}}$ in \eqref{eq:compatibility_c} using the spatial recurrence derived in Section~\ref{subsection:propagation_consistency}. The deployed set $\Omega_r$ is specified by the four-corner geometry in \eqref{eq:subarray_sampling}. Thus, the optimization makes each requirement in \eqref{eq:sparse_to_virtual_compatibility} explicit.

We adopt a two-stage training strategy to improve stability and prevent inaccurate early reconstructions from dominating the recurrence-coefficient estimation.
\begin{itemize}
  \item \textit{Stage-1 (warm-up):} We set $\lambda_{\mathrm{rec}}=0$ and minimize $\mathcal L_{\mathrm{data}}(\Theta;\Omega_r)$ for $I_1$ iterations, yielding an initialization of $\Theta$ consistent with the receiver observations.
  \item \textit{Stage-2 (recurrence-constrained refinement):} Initialized with the Stage-1 solution, we set $\lambda_{\mathrm{rec}}=0.5$. 
  We optimize $\Theta$, $\mathbf c_x$, and $\mathbf c_y$ by minimizing \eqref{eq:main_objective}. This stage ensures that the virtual-aperture surface field $\widehat{\mathbf H}_\Theta$ satisfies the recurrence constraint.
\end{itemize}

Algorithm~\ref{alg:inr_recon} summarizes the reconstruction procedure.

\begin{algorithm}[t]
  \caption{Recurrence-Constrained INR for Surface Field Reconstruction}
  \label{alg:inr_recon}
  \begin{algorithmic}[1]
    \REQUIRE Receiver observations $\mathbf y$, deployed set $\Omega_r$, RIS--Rx coefficients $\{\mathbf G_{m_x,m_y}\}$, RIS phase configurations $\{\Phi_{m_x,m_y}[n]\}$, grid coordinates $\{\widetilde{\mathbf p}_{m_x,m_y}\}$, number of Fourier frequency levels $B$, recurrence order $K$, and numbers of iterations $I_1$ and $I_2$.
    \ENSURE Virtual-aperture surface field $\widehat{\mathbf H}\in\mathbb C^{M_x\times M_y}$.
    \STATE Initialize INR parameters $\Theta$ and recurrence coefficients $\mathbf c_x,\mathbf c_y\in\mathbb C^{K\times1}$.
    \STATE \textbf{Stage 1 (warm-up):} set $\lambda_{\mathrm{rec}}=0$.
    \FOR{$i=1$ to $I_1$}
      \STATE Evaluate $\widehat{\mathbf H}_\Theta=[f_\Theta(\widetilde{\mathbf p}_{m_x,m_y})]_{m_x,m_y}$.
      \STATE Compute $\widehat{\mathbf y}(\Theta;\Omega_r)$ via \eqref{eq:inr_predicted_measurement}.
      \STATE Compute $\mathcal L_{\mathrm{data}}(\Theta;\Omega_r)$ via \eqref{eq:BI_loss}.
      \STATE Update $\Theta$ by minimizing $\mathcal L_{\mathrm{data}}(\Theta;\Omega_r)$.
    \ENDFOR
    \STATE \textbf{Stage 2 (recurrence-constrained refinement):} set $\lambda_{\mathrm{rec}}=0.5$.
    \FOR{$i=1$ to $I_2$}
      \STATE Evaluate $\widehat{\mathbf H}_\Theta=[f_\Theta(\widetilde{\mathbf p}_{m_x,m_y})]_{m_x,m_y}$.
      \STATE Compute $\widehat{\mathbf y}(\Theta;\Omega_r)$ via \eqref{eq:inr_predicted_measurement}.
      \STATE Compute $\mathcal L_{\mathrm{data}}(\Theta;\Omega_r)$.
      \STATE Compute $\mathcal L_{\mathrm{rec}}(\widehat{\mathbf H}_\Theta,\mathbf c_x,\mathbf c_y;\Omega)$.
      \STATE Compute total loss via \eqref{eq:main_objective}.
      \STATE Update $\Theta,\mathbf c_x,\mathbf c_y$ jointly by minimizing \eqref{eq:main_objective}.
    \ENDFOR
    \STATE \textbf{return} $\widehat{\mathbf H}_\Theta$.
  \end{algorithmic}
\end{algorithm}

\subsection{DoA Estimation from Virtual-Aperture Surface Field}
\label{subsection:doa}
As shown in Fig.~\ref{fig_framework}(c), the virtual-aperture surface field $\widehat{\mathbf H}_\Theta$ obtained from Algorithm~\ref{alg:inr_recon} is used for DoA estimation. Specifically, $\widehat{\mathbf H}_\Theta$ is treated as the surface field of a fully deployed aperture. The DoA information is encoded in the spatial phase variation of the surface field. We employ 2D Bartlett spectral estimation in \eqref{eq:bartlett_spectrum} to $\widehat{\mathbf H}_\Theta$, and obtain the DoA estimates $\{(\widehat\theta_k,\widehat\phi_k)\}_{k=1}^{K}$ through peak searching over the resulting spatial spectrum.

\section{Numerical Validation}
This section evaluates V-RIS in controlled simulations. We first define the common simulation protocol and evaluation metrics. We then examine four questions in sequence: whether the proposed reconstruction improves DoA estimation over the ablations and benchmarks, why the four-corner layout is preferred, how the reconstruction scales with aperture size and measurement budget, and how robust it is to non-ideal RIS coefficients. All experiments follow the observation model in \eqref{eq:measurement} and Algorithm~\ref{alg:inr_recon}. 

\subsection{Evaluation Protocol}
\label{subsec:simulation}
\paragraph{Common Simulation Setting}
As shown in Fig.~\ref{fig:numerical_scenario}, we synthesize $K=3$ narrowband far-field targets at $30\mathrm{GHz}$, with wavelength $\lambda=1\mathrm{cm}$. Their DoAs are fixed at
\begin{equation}
  (\theta_k,\phi_k)\in\{(60^\circ,10^\circ),(60^\circ,80^\circ),(35^\circ,45^\circ)\}.
  \label{eq:sim_doa_components}
\end{equation}

The full aperture is represented on an $M_x\times M_y$ reference grid with $M_x=M_y=64$ and element spacing $d_x=d_y=\lambda/2$. $L\times L$ corner subarrays are physically deployed and programmable, where $L=L_x=L_y$, yielding $M_r=4L^2$. The deployment ratio is defined as $\eta = {4L^2}/{M_xM_y}$. Since $L$ is integer-valued and the realizable deployment ratios are discrete, a prescribed target deployment ratio may not always be achieved exactly. In such cases, we use the feasible $L$ that gives the closest realizable deployment ratio.

The receiver is located at $\mathbf{p}_u = [0,0,1]\mathrm{m}$, which lies in the near field of the aperture. For RIS configuration $n$, V-RIS applies a random 1-bit phase configuration $\Phi_{m_x,m_y}[n]\in\{0,\pi\}$ at each $(m_x,m_y)\in\Omega_r$. The receiver records the corresponding observation $y[n]$ according to \eqref{eq:measurement}. The target gains are assumed constant over the $N$ RIS configurations, and additive white Gaussian noise is added at the receiver.

\begin{figure}
  \centering
  \includegraphics[width=\linewidth]{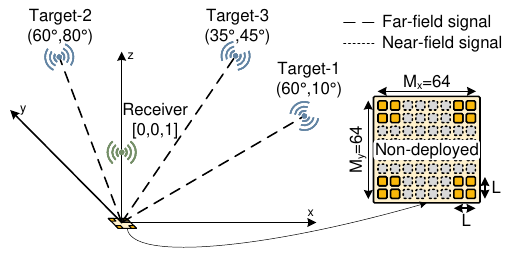}
  \caption{Numerical validation setup. The aperture is represented on a $64\times64$ grid, while only four $L\times L$ corner subarrays are deployed. The receiver is located at $\mathbf{p}_u=[0,0,1]\mathrm{m}$.}
  \label{fig:numerical_scenario}
\end{figure}

\paragraph{Reconstruction and DoA Metrics}
We evaluate both the reconstructed virtual-aperture surface field and its downstream angular information. Surface-field reconstruction is measured by the normalized mean-squared error (NMSE), reported in dB:
\begin{equation}
    \mathrm{NMSE}
    =\frac{\|\widehat{\mathbf H}-\mathbf H\|_F^2}{\|\mathbf H\|_F^2},
    \mathrm{NMSE}_{\mathrm{dB}}
    =10\log_{10}\!\left(\mathrm{NMSE}\right),
  \label{eq:nmse_def}
\end{equation}
where $\|\cdot\|_F$ denotes the Frobenius norm.

Given $\widehat{\mathbf H}$, we compute the Bartlett spectrum in \eqref{eq:bartlett_spectrum} on a $0.01^\circ$ angular grid. The $K$ strongest peaks are matched to the ground-truth targets by minimum-cost assignment. The spectra are additionally used to assess peak localization, sidelobe leakage, and spurious responses.

\subsection{DoA Estimation Performance Comparisons}
This subsection compares V-RIS with representative baselines from three method classes: sparse-aperture completion, direct RIS-aided localization, and INR-based array-response reconstruction. All spectra use the Bartlett estimator in \eqref{eq:bartlett_spectrum}.
\label{subsec:ablation_benchmark}

\paragraph{Experiment}
Unless otherwise stated, we fix $\mathrm{SNR}=20$~dB, $N=200$, and $\eta=25\%$. We use the following labels:
\begin{itemize}
  \item \textit{SHGD-O:} a sparse-aperture completion baseline. SHGD ~\cite{10474161} is given oracle corner field samples and completes the missing aperture.
  \item \textit{Direct-LS:} a direct-localization baseline. LS-based DoA recovery ~\cite{11301934} localizes targets directly from multicode receiver observations.
  \item \textit{RIS-NEAR:} an INR reconstruction baseline. NEAR~\cite{Bu2025NEARNE} is adapted from direct radar-array samples to RIS-coded receiver observations.

  \item \textit{Center-32:} an aperture-size reference using a compact $32\times32$ center aperture directly for DoA estimation.
  \item \textit{V-RIS:} the proposed virtual-aperture reconstruction.
\end{itemize}

\paragraph{Results and Analysis}
Table~\ref{tab:num_doa} reports the matched peak locations, and Fig.~\ref{fig:algorithm_polar_compare} shows the corresponding spectra. The full-aperture oracle is included only as a spectral reference for the ideal field.

\begin{table}
  \centering
  \caption{Comparison of the matched 2D DoA estimates obtained by the benchmark methods at $\mathrm{SNR}=20$~dB with $N=200$ measurements and a deployment ratio of $\eta=25\%$. V-RIS accurately recovers all targets and provides the closest estimates among the virtual-aperture reconstruction methods.}
  \label{tab:num_doa}
  \resizebox{\columnwidth}{!}{%
  \begin{tabular}{lccc}
    \hline
    & \textbf{Target 1} & \textbf{Target 2} & \textbf{Target 3} \\
    \hline
    Ground Truth & $(60.00^\circ,10.00^\circ)$ & $(60.00^\circ,80.00^\circ)$ & $(35.00^\circ,45.00^\circ)$ \\
    SHGD-O & $(64.03^\circ,7.09^\circ)$ & $(64.03^\circ,82.91^\circ)$ & $(39.00^\circ,45.00^\circ)$ \\
    Direct-LS & $(60.02^\circ,9.98^\circ)$ & $(60.04^\circ,80.02^\circ)$ & $(34.95^\circ,45.10^\circ)$ \\
    RIS-NEAR & $(60.03^\circ,10.08^\circ)$ & $(59.90^\circ,79.79^\circ)$ & $(65.40^\circ,80.26^\circ)$ \\
    Center-32 & $(59.99^\circ,10.06^\circ)$ & $(59.96^\circ,80.00^\circ)$ & $(34.98^\circ,45.05^\circ)$ \\
    V-RIS & \textbf{\boldmath$(60.01^\circ,10.00^\circ)$} & \textbf{\boldmath$(60.01^\circ,80.00^\circ)$} & \textbf{\boldmath$(35.01^\circ,45.01^\circ)$} \\
    \hline
  \end{tabular}}
\end{table}

\begin{figure}
  \centering
  \subfloat[Oracle field\label{fig:ablation-true}]{
    \includegraphics[width=0.12\textwidth]{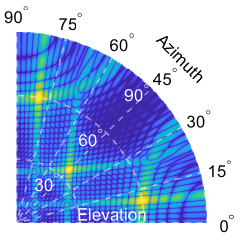}}\hfill
  \subfloat[SHGD-O\label{fig:compare-shgd}]{
    \includegraphics[width=0.12\textwidth]{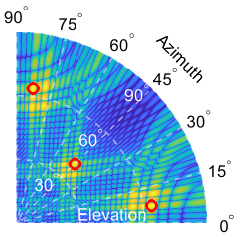}}\hfill
  \subfloat[Direct-LS\label{fig:compare-ls}]{
    \includegraphics[width=0.12\textwidth]{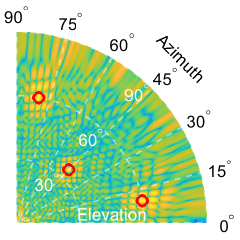}}\hfill\\
  \subfloat[RIS-NEAR\label{fig:compare-near}]{
    \includegraphics[width=0.12\textwidth]{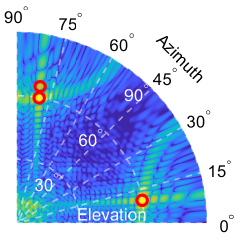}}\hfill
  \subfloat[Center-32\label{fig:compare-centre}]{
    \includegraphics[width=0.12\textwidth]{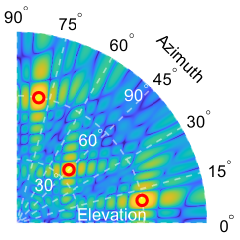}}\hfill
  \subfloat[V-RIS\label{fig:compare-ours}]{
    \includegraphics[width=0.12\textwidth]{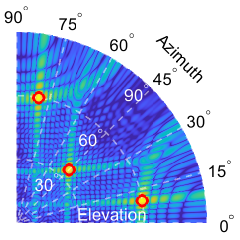}}\hfill
  \caption{Comparison of the 2D Bartlett spectra obtained from the oracle field and the benchmark methods at $\mathrm{SNR}=20$~dB with $N=200$ measurements and $\eta=25\%$, where the red circles indicate the detected peaks. V-RIS resolves all three targets and produces the cleanest non-oracle spectrum.}
  \label{fig:algorithm_polar_compare}
\end{figure}

The three method baselines show different limitations. SHGD-O recovers the three-peak structure, but its elevation errors are about $4^\circ$ and its azimuth errors reach $2.91^\circ$, indicating that sparse-aperture completion from corner samples remains biased. Direct-LS accurately localizes all targets, showing that direct localization can exploit the multicode receiver observations effectively, but it does not recover a reusable virtual-aperture surface field. RIS-NEAR, the INR-based reconstruction baseline, correctly localizes the first two targets but produces a spurious third peak near $(65^\circ,80^\circ)$.

Center-32 gives accurate point estimates, but its spectrum is broader because the physical aperture is smaller. V-RIS successfully resolves all three targets and yields the cleanest non-oracle spectrum in Fig.~\ref{fig:algorithm_polar_compare}. The comparison therefore highlights the benefit of reconstructing a large virtual-aperture surface field from sparse RIS-coded receiver observations.

\subsection{Deployment Layouts for Virtual-Aperture Reconstruction}
\label{subsec:layout_ablation}
\paragraph{Experiment}
To examine the deployment geometry, Fig.~\ref{fig:doa_polar_ablation} compares four layouts at $\mathrm{SNR}=20$~dB and $N=200$. Each sparse pattern contains 1024 deployed elements, corresponding to $\eta=25\%$ of the $64\times64$ reference aperture:
\begin{itemize}
  \item \textit{Random:} 1024 elements are randomly selected over the $64\times64$ aperture and processed by V-RIS.
  \item \textit{Center-V:} a contiguous $32\times32$ center block is virtualized to $64\times64$ by V-RIS.
  \item \textit{RIS-NEAR:} the separable sub-Nyquist deployment used by the adapted NEAR baseline.
  \item \textit{V-RIS:} the proposed four-corner deployment.
\end{itemize}
The four spectra in Fig.~\ref{fig:doa_polar_ablation} are shown without peak markers so that the comparison emphasizes the spatial sampling patterns.

\begin{figure}[t]
  \centering
  \makebox[\linewidth][c]{%
    \subfloat[Random\label{fig:ablation-random}]{%
      \includegraphics[width=0.12\textwidth]{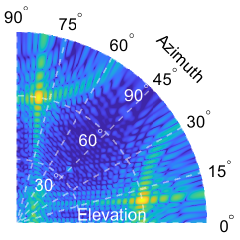}}\hfill
    \subfloat[Center-V\label{fig:ablation-centre}]{%
      \includegraphics[width=0.12\textwidth]{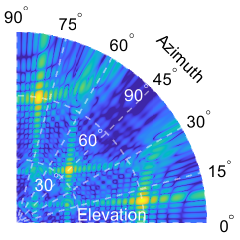}}\hfill
    \subfloat[RIS-NEAR\label{fig:ablation-near}]{%
      \includegraphics[width=0.12\textwidth]{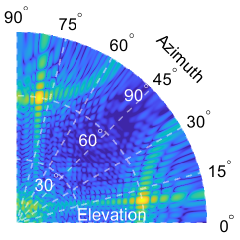}}\hfill
    \subfloat[V-RIS\label{fig:ablation-vris}]{%
      \includegraphics[width=0.12\textwidth]{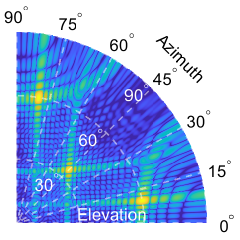}}
  }
  \caption{Comparison of the reconstructed 2D Bartlett spectra across different sparse deployment geometries, each using 1024 physical elements at $\mathrm{SNR}=20$~dB with $N=200$ measurements. The four-corner deployment most clearly preserves the three-target structure.}
  \label{fig:doa_polar_ablation}
\end{figure}

\paragraph{Results and Analysis}
Figure~\ref{fig:doa_polar_ablation} shows that equal element counts do not lead to equal reconstruction quality. Random covers the aperture but provides little contiguous local structure, so the recurrence coefficients are weakly anchored and the spectrum contains stronger background leakage. Center-V provides contiguous samples, but it does not retain the aperture extremes. These observations agree with the CRBs in \eqref{eq:geometry_information}: preserving information along both axes requires large centered coordinate spreads, while a small centered cross term avoids coupling between the two directions.

RIS-NEAR occupies a middle ground: it preserves long baselines, but its separable noncontiguous deployment is less compatible with the consecutive-sample requirement of the recurrence model. The proposed four-corner layout is the only tested layout that jointly provides the coordinate support and contiguous local neighborhoods.

\subsection{Ablation of Consistency Terms and Sparse Deployment}
\label{subsec:consistency_ablation}
\paragraph{Experiment}
To isolate the contributions of propagation consistency, the bias-invariant data term, and sparse deployment, we compare three controlled variants with the proposed method using the same numerical setting and $N=200$ measurement snapshots:
\begin{itemize}
  \item \textit{V-RIS w/o Bias:} retains propagation consistency but replaces the bias-invariant receiver-domain loss with a direct unaligned data loss.
  \item \textit{V-RIS w/o Prop.:} removes propagation consistency by setting $\lambda_{\mathrm{rec}}=0$ while retaining the data-level consistency term and four-corner deployment.
  \item \textit{V-RIS w/ Full Dep.:} deploys all $64\times64$ RIS elements while keeping the V-RIS reconstruction objective and the same $N=200$ measurement snapshots.
  \item \textit{V-RIS:} the proposed method with both consistency terms and the four-corner deployment.
\end{itemize}

\begin{table}
  \centering
  \caption{Ablation study of the bias-invariant data term, propagation consistency, and sparse deployment using $N=200$ measurement snapshots. Only the complete V-RIS configuration accurately recovers all three targets, demonstrating that both consistency terms are necessary.}
  \label{tab:ablation}
  \resizebox{\columnwidth}{!}{%
  \begin{tabular}{lccc}
    \hline
    & \textbf{Target 1} & \textbf{Target 2} & \textbf{Target 3} \\
    \hline
    Ground Truth & $(60.00^\circ,10.00^\circ)$ & $(60.00^\circ,80.00^\circ)$ & $(35.00^\circ,45.00^\circ)$ \\
    V-RIS w/o Bias & $(59.81^\circ,9.87^\circ)$ & $(77.47^\circ,53.85^\circ)$ & $(31.55^\circ,17.07^\circ)$ \\
    V-RIS w/o Prop. & $(0.00^\circ,0.00^\circ)$ & $(0.00^\circ,10.01^\circ)$ & $(30.65^\circ,0.00^\circ)$ \\
    V-RIS w/ Full Dep. & $(9.40^\circ,44.19^\circ)$ & $(8.22^\circ,58.75^\circ)$ & $(77.85^\circ,2.56^\circ)$ \\
    V-RIS & \textbf{\boldmath$(60.01^\circ,9.97^\circ)$} & \textbf{\boldmath$(60.13^\circ,79.97^\circ)$} & \textbf{\boldmath$(34.95^\circ,45.02^\circ)$} \\
    \hline
  \end{tabular}}
\end{table}

\paragraph{Results and Analysis}
The three controlled variants expose different failure modes. Without propagation consistency, the receiver-domain term can be reduced without preserving the spatial phase relation across the aperture. Removing the bias-invariant loss leaves the propagation prior intact, but the direct unaligned mismatch forces the INR to explain receiver gain and configuration-invariant additive effects through the field itself.

The full-deployment ablation shows that deploying more elements does not necessarily improve performance when the measurement-snapshot budget is fixed. The data-level inverse problem becomes more underdetermined and the phase configurations provide insufficient diversity to separate the full-deployment responses. In contrast, V-RIS combines the smaller deployed set with propagation consistency to reconstruct the virtual-aperture surface field and recovers all three targets with coordinate errors below $0.2^\circ$.

\subsection{Scalability of Virtual-Aperture Reconstruction}
\paragraph{Experiment}
We vary the virtual-aperture side length, deployment ratio, and number of RIS configurations to characterize the measurement-budget requirement of the proposed reconstruction. Four square apertures with $M_x=M_y\in\{48,64,96,128\}$ are evaluated under target deployment ratios $\eta\in\{6.25\%,12.5\%,25\%\}$ at $\mathrm{SNR}=20$~dB. 

\begin{figure}
  \centering
  \subfloat[$48\times48$\label{fig:G48}]{%
    \includegraphics[width=0.48\columnwidth]{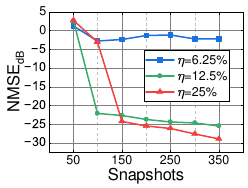}}\hfill
  \subfloat[$64\times64$\label{fig:G64}]{%
    \includegraphics[width=0.48\columnwidth]{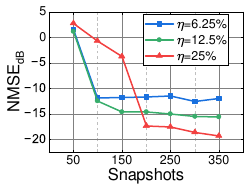}}\\
  \subfloat[$96\times96$\label{fig:G96}]{%
    \includegraphics[width=0.48\columnwidth]{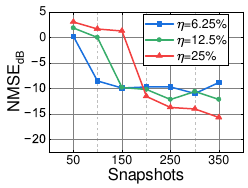}}\hfill
  \subfloat[$128\times128$\label{fig:G128}]{%
    \includegraphics[width=0.48\columnwidth]{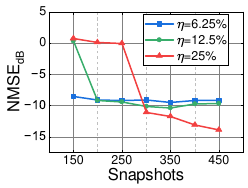}}
  \caption{Scaling behavior versus aperture size, deployment ratio, and measurement budget. Panels show NMSE versus the number of RIS configurations for $48\times48$, $64\times64$, $96\times96$, and $128\times128$ apertures at $\mathrm{SNR}=20$~dB with $\eta\in\{6.25\%,12.5\%,25\%\}$. The NMSE decreases rapidly after a measurement-budget threshold and improves with larger corner subarrays.}
  \label{fig:scaling_law}
\end{figure}

\paragraph{Results and Analysis}
Figure~\ref{fig:scaling_law} shows a clear threshold effect with respect to the measurement budget. When $N$ is small, the receiver observations are insufficient to constrain the deployed-element field and the recurrence coefficients, so the NMSE stays high. Once enough RIS configurations are available, the NMSE drops rapidly and then saturates.

For a fixed aperture, increasing $\eta$ usually lowers the final NMSE by increasing the corner size $L$. Larger corner blocks provide more consecutive samples for estimating the recurrence and give stronger anchors for surface-field reconstruction. The curves are not fully monotonic at small $N$, because a larger deployed-element set also requires more receiver observations to separate its field samples reliably. The clearest exception is the $48\times48$ aperture with $\eta=6.25\%$: $L=6$ provides too few local samples for the recurrence order $K=3$, so its NMSE remains high even when more RIS configurations are added. Overall, the results show that the proposed reconstruction can scale to larger RIS apertures, provided that the measurement budget and corner-block size are increased accordingly.

\subsection{Robustness to Non-Ideal RIS Reflection}
\paragraph{Experiment}
Practical RIS elements often exhibit amplitude attenuation and phase deviations due to finite-resolution control, component mismatch, and calibration imperfections. These non-ideal reflections cause the actual RIS response to deviate from the ideal forward model used for reconstruction, which can degrade both surface-field reconstruction and downstream DoA estimation~\cite{9534477,9115725,cao2024risinsufficientphaseshifting}. This motivates a controlled robustness test that quantifies the sensitivity of V-RIS to reflection-induced forward-model mismatch.

To emulate such non-ideal reflections, the effective reflection coefficient of element $(m_x,m_y)$ under RIS configuration $n$ is modeled as
\begin{equation}
  \tilde{\Gamma}_{m_x,m_y}[n]
  = \big(1+\epsilon_{m_x,m_y}\big)
  e^{j(\Phi_{m_x,m_y}[n]+\delta_{m_x,m_y})},
  \label{eq:amp_phase_error_model}
\end{equation}
where $\epsilon_{m_x,m_y}$ and $\delta_{m_x,m_y}$ represent the amplitude and phase perturbations in the effective reflection coefficient, respectively. To isolate the two error sources, we conduct separate amplitude and phase sweeps. Let $\mathcal{U}[a,b]$ denote the uniform distribution over $[a,b]$. For amplitude robustness, we set $\delta_{m_x,m_y}=0$ and draw $\epsilon_{m_x,m_y}$ from $\mathcal{U}[-\Delta\epsilon,0]$, with $\Delta\epsilon\in\{0,2\%,5\%,10\%,15\%,20\%\}$. For phase robustness, we set $\epsilon_{m_x,m_y}=0$ and draw $\delta_{m_x,m_y}$ from $\mathcal{U}[-\Delta\delta,\Delta\delta]$, with $\Delta\delta\in\{0^\circ,2^\circ,5^\circ,10^\circ,15^\circ,20^\circ\}$. In each Monte Carlo trial, the active perturbation is independently drawn for each deployed element and kept fixed over all RIS configurations.

We fix $\mathrm{SNR}=20$~dB, $N=200$, and $\eta=25\%$, with 50 Monte Carlo trials per impairment setting.

\begin{figure}
  \centering
  \subfloat[Amplitude perturbation\label{fig:amp_error}]{%
    \includegraphics[width=0.48\columnwidth]{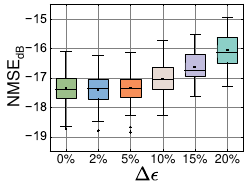}}\hfill
  \subfloat[Phase perturbation\label{fig:phase_error}]{%
    \includegraphics[width=0.48\columnwidth]{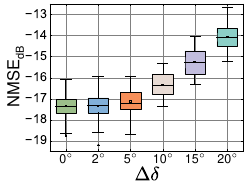}}
  \caption{Evaluation of the reconstruction NMSE under independently controlled RIS amplitude and phase perturbations at $\mathrm{SNR}=20$~dB, $N=200$, and $\eta=25\%$ over 50 Monte Carlo trials. The reconstruction remains stable over the tested perturbation ranges, although phase errors induce slightly larger NMSE variations than amplitude errors.}
  \label{fig:amp_phase_error}
\end{figure}

\paragraph{Results and Analysis}
As Fig.~\ref{fig:amp_phase_error} shows, increasing either perturbation bound raises the median NMSE. The degradation is more pronounced for phase errors than for amplitude errors over the tested range. This difference is consistent with coherent aperture processing: phase perturbations directly distort the relative spatial phase used by both recurrence-based reconstruction and subsequent spectrum formation. The gradual degradation indicates that the recurrence loss continues to constrain the field under moderate forward-model mismatch induced by reflection errors.

\section{Proof-of-Concept Prototype}
We evaluate V-RIS with an outdoor proof-of-concept prototype. This section describes the hardware, measurement configurations, reconstruction procedure, and evaluation metrics, and reports the DoA result.

\subsection{Hardware and Measurement Setup}
The prototype implements a $16 \times 16$ RIS on the $xy$-plane with element spacings $d_x = d_y = \lambda/2 \approx 2.5$~cm at carrier frequency $f_c = 5.8$~GHz, the operating frequency of the fabricated RIS. The RIS operates with 1-bit phase control, as shown in Fig.~\ref{fig:ris}.

\begin{figure}
  \centering
  \subfloat[Front view\label{fig:ris_a}]{%
    \includegraphics[width=0.48\columnwidth]{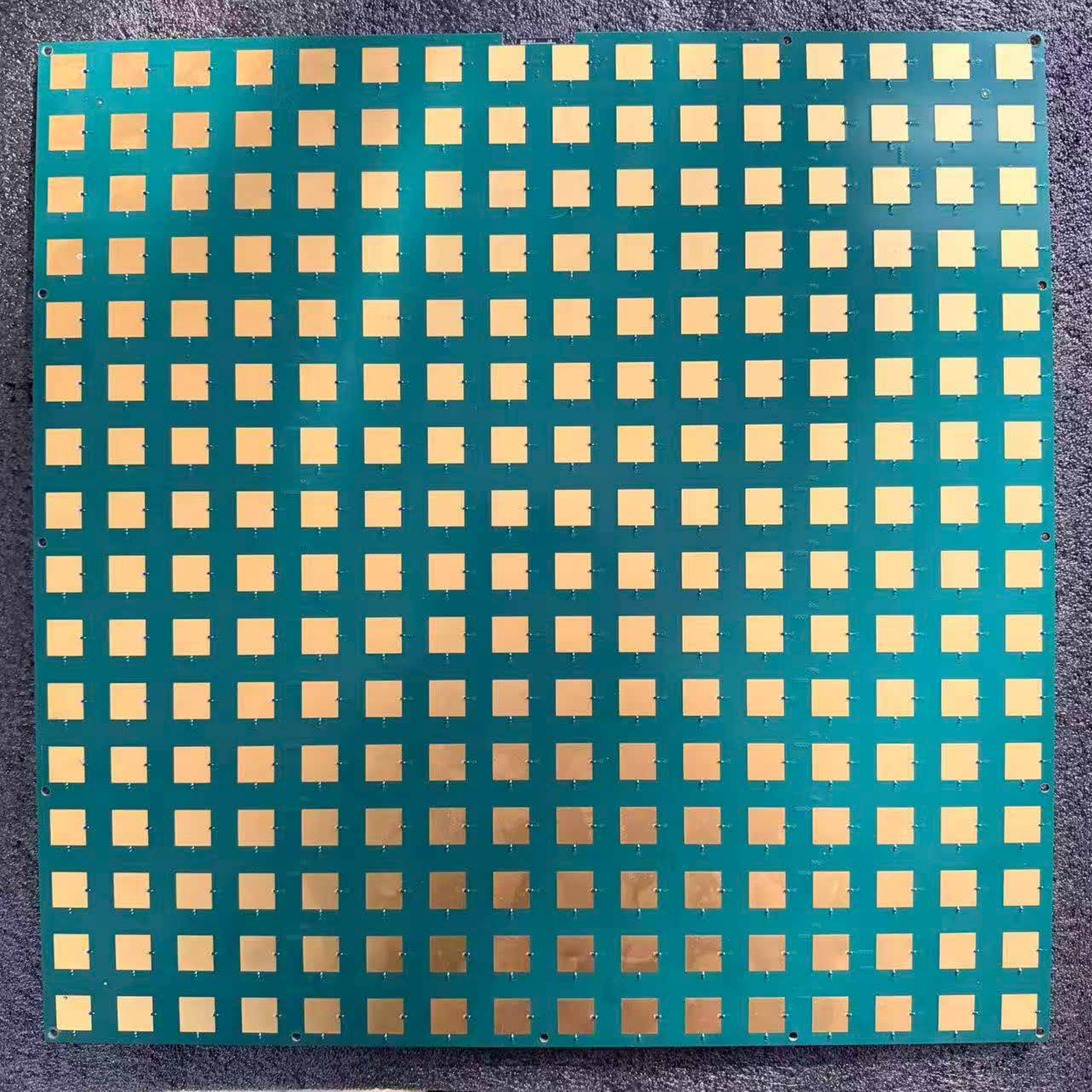}}\hfill
  \subfloat[Back view\label{fig:ris_b}]{%
    \includegraphics[width=0.48\columnwidth]{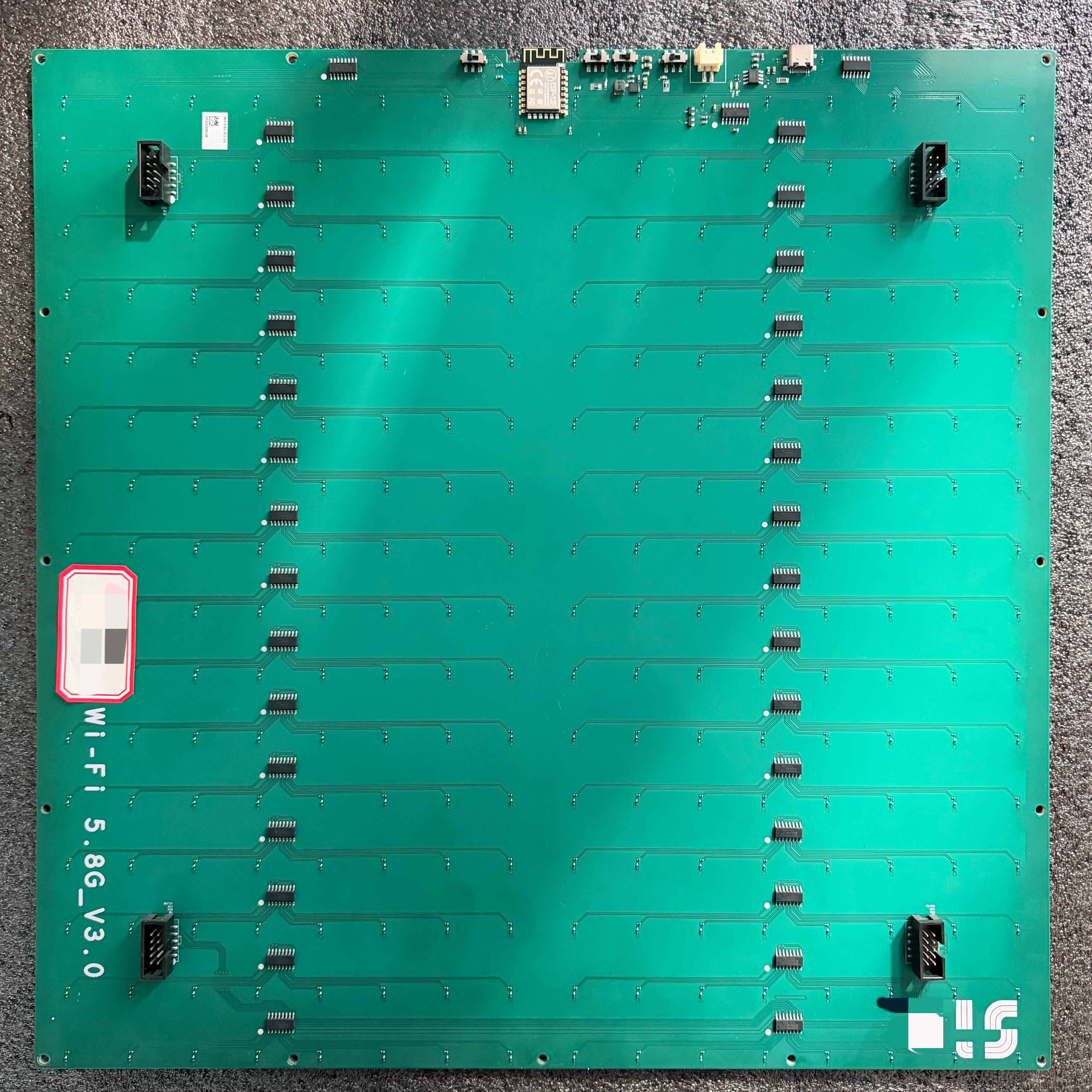}}
  \caption{Front and back views of the $16\times16$ RIS prototype operating at $5.8$~GHz with 1-bit phase control.}
  \label{fig:ris}
\end{figure}

Figures~\ref{fig:prototype_scenario} and \ref{fig:prototype_geometry} show the outdoor deployment geometry. A single target illuminates the RIS, and the receiver is connected to a universal software radio peripheral (USRP). The target--RIS distance is approximately $15.1~\mathrm{m}$, placing the incident field in the radiative far field and supporting the plane-wave model in \eqref{eq:measurement}. RIS--Rx propagation is represented by the geometry-dependent coefficients $\{\mathbf G_{m_x,m_y}\}$ in \eqref{eq:measurement}. For the $n$-th RIS configuration, the receiver records the corresponding complex baseband observation $y[n]$.

\begin{figure}
  \centering
  \includegraphics[width=0.85\linewidth]{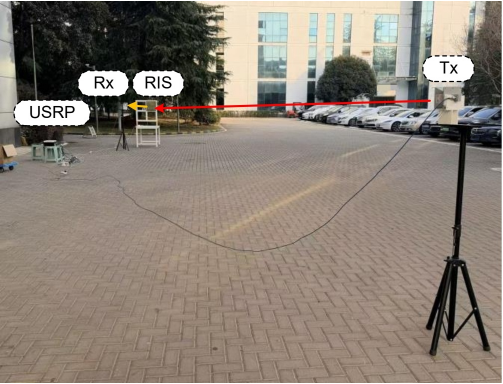}
  \caption{Outdoor prototype setup. A target illuminates the RIS, and a USRP-connected receiver captures the reflected signal.}
  \label{fig:prototype_scenario}
\end{figure}

\begin{figure}
  \centering
  \includegraphics[width=\linewidth]{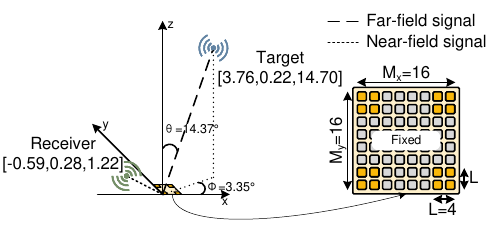}
  \caption{Geometry of the outdoor scenario. The target--RIS distance is approximately $15.1$~m, and the receiver location is represented through the prescribed RIS--Rx geometry.}
  \label{fig:prototype_geometry}
\end{figure}

We collect receiver observations under two RIS programming configurations:
\begin{itemize}
  \item \textit{Full-aperture programmable baseline:} All $16\times16$ elements participate in random 1-bit programming, corresponding to $\Omega_r^{(\mathrm{full})}=\Omega=\{1,\ldots,16\}\times\{1,\ldots,16\}$.
  \item \textit{Four-corner V-RIS configuration:} Only four $4\times 4$ corner subarrays are randomly programmed, corresponding to $\Omega_r^{(\mathrm{corner})}$ with $M_r=64$ programmable elements, or $25\%$ of the full aperture. The elements in $\Omega_v=\Omega\setminus\Omega_r^{(\mathrm{corner})}$ remain reflective with a fixed zero-phase state:
    \begin{equation}
      \begin{aligned}
      \Phi_{m_x,m_y}[n]&=
        \begin{cases}
          \text{random in }\{0,\pi\}, & \substack{(m_x,m_y)\in\\\Omega_r^{(\mathrm{corner})}},\\
          0, & (m_x,m_y)\in\Omega_v.
        \end{cases}
      \end{aligned}
      \label{eq:ris_phase_profile_poc}
    \end{equation}
\end{itemize}

The four-corner V-RIS configuration reduces the number of programmed elements from 256 to 64. Unlike the idealized sparse aperture in the numerical experiments, however, the non-corner elements remain physically present and contribute a reflected component under their fixed phase state. Because the geometry and fixed states are quasi-static, this contribution is approximately invariant across RIS configurations and can be treated as an additive bias.

\subsection{Reconstruction and Evaluation Procedure}
\paragraph{Practical Surface-Field Reconstruction}
The measured I/Q values contain amplitude and phase offsets, quasi-static coupling, residual leakage, and the fixed-state contribution described above. These effects are absorbed by the affine nuisance variables in \eqref{eq:BI_loss}, where the additive term $\nu$ represents contributions that are approximately constant across RIS configurations. This approximation is appropriate for the present quasi-static geometry.

We use the bias-invariant receiver-domain loss $\mathcal{L}_{\mathrm{data}}$ in \eqref{eq:BI_loss} when applying Algorithm~\ref{alg:inr_recon}. The closed-form alignment removes the configuration-invariant contribution and compensates for the unknown global complex gain at every optimization iteration, allowing the recurrence-constrained reconstruction to focus on the configuration-dependent spatial response.

The full-aperture baseline provides the reference field, whereas the four-corner V-RIS configuration reconstructs the $16\times16$ field from four programmable corner subarrays. Both use the same two-stage optimization. 

\paragraph{Evaluation Metrics}
We estimate the DoA from $\widehat{\mathbf H}$ using the Bartlett spectrum in \eqref{eq:bartlett_spectrum}. Let $(\theta,\phi)$ denote the ground-truth elevation and azimuth, and let $(\widehat{\theta}^{(q)},\widehat{\phi}^{(q)})$ denote their estimates for configuration $q$, where $q\in\{\mathrm{full},\mathrm{corner}\}$ identifies the full-aperture programmable baseline and the four-corner V-RIS configuration, respectively. For the single incident target, the coordinate-wise DoA error pair is
\begin{equation}
  \mathbf{e}^{(q)}
  =\big(e_{\theta}^{(q)},e_{\phi}^{(q)}\big)
  =\big(\widehat{\theta}^{(q)}-\theta,\
  \widehat{\phi}^{(q)}-\phi\big),
  \label{eq:doa_err_def}
\end{equation}
where $e_{\theta}^{(q)}$ and $e_{\phi}^{(q)}$ are the elevation and azimuth errors, respectively. To quantify the accuracy loss due to sparse programmability, we define the absolute error increment relative to the full-aperture programmable baseline as
\begin{equation}
  \Delta|\mathbf{e}| = \big(
    |e_{\theta}^{(\mathrm{corner})}|-|e_{\theta}^{(\mathrm{full})}|,
    |e_{\phi}^{(\mathrm{corner})}|-|e_{\phi}^{(\mathrm{full})}|
  \big).
  \label{eq:delta_abs_err}
\end{equation}
Thus, $\Delta|\mathbf{e}|$ denotes the additional absolute error introduced by four-corner programming relative to the full-aperture baseline.


\subsection{Prototype Results}
\paragraph{DoA Accuracy}

\begin{table}
  \centering
  \caption{Comparison of the outdoor prototype DoA estimates. Both configurations achieve sub-degree elevation and azimuth errors, while four-corner programming introduces absolute error increases of only $0.41^\circ$ and $0.21^\circ$ using $25\%$ of the programmable elements.}
  \label{tab:poc_doa}
\resizebox{\columnwidth}{!}{%
  \begin{tabular}{cccc}
    \hline
    & \textbf{Ground truth} & \textbf{Full-aperture baseline} & \textbf{Four-corner V-RIS} \\
    \hline
    DoA pair & $(14.37^\circ,3.35^\circ)$ & $(13.84^\circ,3.79^\circ)$ & $(13.43^\circ,2.70^\circ)$ \\
    $\mathbf{e}^{(q)}$ & -- & $(-0.53^\circ,0.44^\circ)$ & $(-0.94^\circ,-0.65^\circ)$ \\
    $\Delta|\mathbf{e}|$ & -- & -- & $(0.41^\circ,0.21^\circ)$ \\
    \hline
  \end{tabular}}
\end{table}

Table~\ref{tab:poc_doa} summarizes the DoA estimates and the corresponding metrics defined in \eqref{eq:doa_err_def} and \eqref{eq:delta_abs_err}. Both configurations achieve sub-degree errors in elevation and azimuth. Compared with full-aperture programming, the four-corner configuration increases the absolute errors by only $0.41^\circ$ and $0.21^\circ$, respectively, indicating that the reconstructed virtual-aperture surface field largely preserves the angular information despite using only $25\%$ programmable elements.

\paragraph{Spectrum Characteristics}
\begin{figure}
  \makebox[\linewidth][c]{%
    \subfloat[Full aperture\label{fig:doa_full}]{%
      \includegraphics[width=0.3\columnwidth]{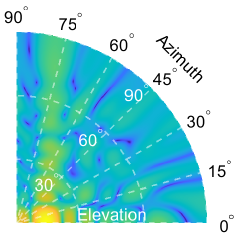}}
    \hspace{0.15\linewidth}
    \subfloat[Four corners\label{fig:doa_corner}]{%
      \includegraphics[width=0.3\columnwidth]{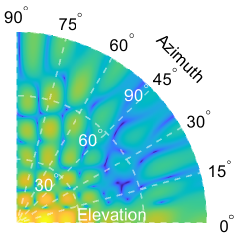}}
  }
  \caption{Comparison of the measured 2D Bartlett spectra obtained using the full-aperture programmable baseline and the four-corner V-RIS configuration. Four-corner V-RIS preserves the dominant DoA peak.}
  \label{fig:doa_poc}
\end{figure}

Figure~\ref{fig:doa_poc} provides the complementary spectral view. The four-corner V-RIS reconstruction preserves the dominant peak location but exhibits a higher sidelobe floor and more ripple-like artifacts than the full-aperture programmable baseline. These artifacts are consistent with the greater sensitivity of sparse programmability to phase-switching coupling, element-response nonuniformity, and configuration-to-configuration propagation fluctuations. The result therefore supports the DoA estimate while also showing the practical spectral cost of reducing the number of programmed elements.




\section{Conclusion}
This paper presented V-RIS, a virtual-aperture RIS framework for high-resolution 2D DoA estimation with sparse physical deployment. V-RIS addresses the hardware-cost and aperture-resolution tension by using four corner subarrays and a single-antenna receiver to reconstruct a virtual-aperture surface field from RIS-coded scalar receiver observations. The reconstruction is enabled by the finite-order spatial recurrence of far-field surface fields, which enforces propagation consistency when extending the surface field from deployed corner regions to virtual aperture positions. We incorporated this condition into a unified INR-based reconstruction formulation with a two-stage joint optimization procedure, and further introduced a bias-invariant receiver-domain loss to suppress quasi-static hardware distortions and configuration-invariant multipath contributions in practical receiver observations. Numerical experiments show that V-RIS approaches the DoA accuracy of a full-aperture benchmark, resolves the target directions more reliably than the compared reconstruction baselines, and benefits from the four-corner layout because it provides both local recurrence samples and long aperture baselines. The outdoor prototype further shows that V-RIS preserves sub-degree elevation and azimuth accuracy while programming only $25\%$ of the RIS elements, with only small error increases relative to full-aperture programming. These findings indicate that virtual-aperture surface-field reconstruction can decouple the RIS sensing aperture from the programmable-element count, offering a practical path toward lower-cost high-resolution RIS-aided DoA estimation. Future work will extend this principle to near-field, wideband, dynamic, and stronger-multipath scenarios while reducing the measurement and optimization overhead.

\appendices
\section{Proof of Affine Nuisance Elimination}
\label{Appendix:proof_LS}
\begin{proof}[Proof of Proposition~\ref{prop:affine_alignment}]
Fix $\Theta$ and abbreviate $\widehat{\mathbf y}=\widehat{\mathbf y}(\Theta;\Omega_r)$. Using the notation of Proposition~\ref{prop:affine_alignment}, define $\mathbf y_c=\mathbf P\mathbf y$ and $\widehat{\mathbf y}_c=\mathbf P\widehat{\mathbf y}$. Then
\[
\mathbf y-\rho\widehat{\mathbf y}-\nu\mathbf 1_N
=\mathbf y_c-\rho\widehat{\mathbf y}_c+(\mu_y-\rho\mu_{\widehat y}-\nu)\mathbf 1_N.
\]

Since the centered vectors are orthogonal to $\mathbf 1_N$,
\[
\left\|\mathbf y-\rho\widehat{\mathbf y}-\nu\mathbf 1_N\right\|_2^2
=\left\|\mathbf y_c-\rho\widehat{\mathbf y}_c\right\|_2^2+N\left|\mu_y-\rho\mu_{\widehat y}-\nu\right|^2.
\]

Hence, for fixed $\rho$, $\nu^\star(\rho)=\mu_y-\rho\mu_{\widehat y}$. The remaining problem is $\min_{\rho\in\mathbb C}\left\|\mathbf y_c-\rho\widehat{\mathbf y}_c\right\|_2^2$.
Since Proposition~\ref{prop:affine_alignment} assumes
$\mathbf P\widehat{\mathbf y}=\widehat{\mathbf y}_c\ne\mathbf 0$,
the denominator $\|\widehat{\mathbf y}_c\|_2^2$ is nonzero, and the unique
least-squares solution is
\[
\rho^\star(\Theta;\Omega_r)
=\frac{\widehat{\mathbf y}_c^{\mathrm H}\mathbf y_c}{\|\widehat{\mathbf y}_c\|_2^2}
=\frac{\widehat{\mathbf y}^{\mathrm H}\mathbf P\mathbf y}{\widehat{\mathbf y}^{\mathrm H}\mathbf P\widehat{\mathbf y}},
\]
\[
\nu^\star(\Theta;\Omega_r)
=\mu_y-\rho^\star(\Theta;\Omega_r)\mu_{\widehat y}.
\]
\end{proof}

\bibliographystyle{IEEEtran}
\bibliography{main}

\end{document}